\newcommand\ie{{\em i.e.}~}
\newcommand\eg{{\em e.g.}~}
\def\D{\mathcal D}
\def\H{\mathcal H}
\def\m{\mathrm m}
\def\N{\mathbb N}
\def\R{\mathbb R}
\def\RR{\mathcal R}
\def\T{\mathsf T}
\def\lone{\mathsf{L}^{\:\!\!1}}
\def\cinf{\mathsf{C}^\infty}
\def\e{\mathop{\mathrm{e}}\nolimits}
\def\d{\mathrm{d}}
\def\supp{\mathop{\mathrm{supp}}\nolimits}
\def\Crit{\mathop{\mathsf{Crit}}\nolimits}
\def\Ran{\mathop{\mathsf{Ran}}\nolimits}
\def\Hess{\mathop{\mathsf{Hess}}\nolimits}
\newtheorem{Theorem}{Theorem}[section]
\newtheorem{Remark}[Theorem]{Remark}
\newtheorem{Lemma}[Theorem]{Lemma}
\newtheorem{Assumption}[Theorem]{Assumption}
\newtheorem{Corollary}[Theorem]{Corollary}
\newtheorem{Proposition}[Theorem]{Proposition}
\newtheorem{Example}[Theorem]{Example}
\begin{document}

%--------------------------------------------------------------------------------------
% Title
%--------------------------------------------------------------------------------------

\title{Time delay and Calabi invariant in classical scattering theory}

\author{A. Gournay$^1$ and R. Tiedra de Aldecoa$^2$\footnote{Supported by the
Fondecyt Grant 1090008 and by the Iniciativa Cientifica Milenio ICM P07-027-F
``Mathematical Theory of Quantum and Classical Magnetic Systems".}}

\date{\small}
\maketitle \vspace{-1cm}

\begin{quote}
\emph{
\begin{itemize}
\item[$^1$] Universit\'e de Neuch\^atel, Rue E.-Argand 11, CH-2000 Neuch\^atel,
Switzerland.
\item[$^2$] Facultad de Matem\'aticas, Pontificia Universidad Cat\'olica de
Chile,\\
Av. Vicu\~na Mackenna 4860, Santiago, Chile
\item[] \emph{E-mails:} antoine.gournay@unine.ch, rtiedra@mat.puc.cl
\end{itemize}
}
\end{quote}

%--------------------------------------------------------------------------------------
% Abstract
%--------------------------------------------------------------------------------------

\begin{abstract}
We define, prove the existence and obtain explicit expressions for classical time
delay defined in terms of sojourn times for abstract scattering pairs $(H_0,H)$ on a
symplectic manifold. As a by-product, we establish a classical version of the
Eisenbud-Wigner formula of quantum mechanics. Using recent results of V. Buslaev and
A. Pushnitski on the scattering matrix in Hamiltonian mechanics, we also obtain an
explicit expression for the derivative of the Calabi invariant of the Poincar\'e
scattering map.

Our results are applied to dispersive Hamiltonians, to a classical particle in a tube
and to Hamiltonians on the Poincar\'e ball.
\end{abstract}

\textbf{2000 Mathematics Subject Classification:} 37J99, 70H99, 37N05

\smallskip

\textbf{Keywords:} Scattering theory, time delay, Calabi invariant, Hamiltonian
mechanics

%--------------------------------------------------------------------------------------
\section{Introduction}\label{Intro}
\setcounter{equation}{0}
%--------------------------------------------------------------------------------------

Since the works of D. Boll\'e, H. Narnhofer, W. Thirring and collaborators in the
80's, it is known that one can define properly a notion of time delay in terms of
sojourn times in classical scattering theory. However, most of the mathematical works
on the topic (if not all) provide a complete description only for scattering pairs
$(H_0,H)$ where the free Hamiltonian is of the type $H_0(q,p)=|p|^2/2$ on $\R^{2n}$.
Therefore, a legitimate interrogation is whether it is possible to define, to prove
the existence and to obtain explicit expressions for classical time delay for a
general class of scattering pairs in the modern set-up of symplectic geometry.
Answering (affirmatively) to these questions is the purpose of the present paper.

Our interest in these issues has been aroused by recent articles on time delay in
quantum mechanics \cite{RT10,RT11} and on the scattering matrix in Hamiltonian
mechanics \cite{BP09}. In \cite{RT11}, the authors prove that the existence of time
delay defined in terms of sojourn times, as well as its identity with Eisenbud-Wigner
time delay \cite{Smi60,Wig55}, is a common feature of two-Hilbert spaces quantum
scattering theory. Their proofs rely on abstract commutator methods and on an
integral formula relating localisation operators to time operators \cite{RT10}. Here,
we use the classical counterpart of this formula, established in \cite{GT11}, to
obtain similar results in classical scattering theory as well as an explicit
expression for the derivative of the Calabi invariant of the Poincar\'e scattering
map. Our approach takes its roots in the following observations\;\!: When
$H_0(q,p)=|p|^2/2$ on $\R^{2n}$, the usual position observables $\Phi_j(q,p):=q^j$
satisfy the simple Poisson bracket identity
\begin{equation}\label{Eq_bracket}
\big\{\{\Phi_j,H_0\},H_0\big\}=0.
\end{equation}
In consequence, the time evolution of the observables $\Phi_j$ under the flow
$\varphi^0_t$ of $H_0$ is lineal with growth rate $\{\Phi_j,H_0\}=p_j$. Accordingly,
any trajectory $\{\varphi^0_t(q,p)\}_{t\in\R}$ with initial velocity $p\neq0$ escapes
from each ball $B_r:=\{q\in\R^n\mid|q|\le r\}$ as $|t|\to\infty$. Similarly, if
$V:=H-H_0$ is a suitable perturbation of $H_0$ and if the initial condition $(q,p)$
is well chosen, the perturbed trajectory corresponding to the free trajectory
$\{\varphi^0_t(q,p)\}_{t\in\R}$ also escapes from each ball $B_r$ as $|t|\to\infty$.
In such a case, the difference of sojourn times in $B_r$ between the two trajectories
may converge to a finite value, called the global time delay for $(q,p)$, as
$r\to\infty$. This is well known and has been established by various authors for
different types of perturbations $V$ (see for instance
\cite{BD81,BGG83,BO81,GT07,KK92,Nar80,NT81,Thi97}). Fine. But what happens when $H_0$
and $H$ are abstract Hamiltonians on a given symplectic manifold $M$\;\!? If the
dimension of $M$ is finite, Darboux's theorem guarantees us that there exist, at
least locally, canonical coordinates on $M$. However, these coordinates have usually
nothing to do with the free Hamiltonian $H_0$, and are in consequence inappropriate
for the definition of sojourn times. Therefore, our point of view is instead to
retain as position observables merely functions $\Phi_j$ satisfying
\eqref{Eq_bracket}, as in the case $H_0(q,p)=|p|^2/2$. This choice is certainly not
the most general one, but it turns out to be extremely rewarding as we shall explain
below. Here, we just note three facts on its favour. First, it has been shown in
\cite[Sec.~4]{GT11} that there exist natural position observables $\Phi$ satisfying
\eqref{Eq_bracket} for many Hamiltonian systems $(M,H_0)$ appearing in literature.
Second, we know that this approach works in the quantum case \cite{RT11}. Finally,
the condition \eqref{Eq_bracket} is formulated in an invariant way on $M$, without
any mention to the particular structure of $H_0$.

So, let $H_0$ and $H$ be Hamiltonians on a symplectic manifold $M$ with Poisson
bracket $\{\;\!\cdot\;\!,\;\!\cdot\;\!\}$, assume that $H_0$ and $H$ have complete
flows $\{\varphi^0_t\}_{t\in\R}$ and $\{\varphi_t\}_{t\in\R}$, and let
$\Phi:=(\Phi_1,\ldots,\Phi_d)$ be a family of observables satisfying
\eqref{Eq_bracket}. Then, the vector
$\nabla H_0:=\big(\{\Phi_1,H_0\},\ldots,\{\Phi_d,H_0\}\big)$ and the set
$$
\Crit(H_0,\Phi):=(\nabla H_0)^{-1}(\{0\})\subset M
$$
can be interpreted, respectively, as the velocity observable and the set of critical
points associated to $H_0$ and $\Phi$ (see \cite[Ass.~2.2 \& Def.~2.5]{RT10} for
quantum analogues). Accordingly, the free trajectories $\{\varphi^0_t(m)\}_{t\in\R}$
with $m\in M\setminus\Crit(H_0,\Phi)$ escape from each set $\Phi^{-1}(B_r)$ as
$|t|\to\infty$ (as in the case of $H_0(q,p)=|p|^2/2$, where
$\Crit(H_0,\Phi)=\R^n\times\{0\}$ and $\Phi^{-1}(B_r)=B_r$). Therefore, we have
propagation in $M\setminus\Crit(H_0,\Phi)$, and the wave maps 
$W_\pm:=\lim_{t\to\pm\infty}\varphi_{-t}\circ\varphi_t^0$ exist and are well defined
symplectomorphisms on $M\setminus\Crit(H_0,\Phi)$ if $V\equiv H-H_0$ is suitable
(Theorem \ref{t-thithi}). Using a virial type argument, we then show in Lemma
\ref{t-condvir} that $W_\pm$ are complete, and thus that the scattering map
$S:=W_+^{-1}\circ W_-$ is also a well defined symplectomorphism on
$M\setminus\Crit(H_0,\Phi)$. With these objects at hand, we introduce in Section
\ref{Sec_time} the symmetrised time delay $\tau_r$, defined in terms of sojourn times
in the sets $\Phi^{-1}(B_r)$, for the general scattering system $(M,H_0,H)$. Then, we
prove the existence of the limit $\tau:=\lim_{r\to\infty}\tau_r$ and its identity
with a difference of arrival times similar to Eisenbud-Wigner Formula in quantum
mechanics (Theorem \ref{Teo_sym} and Remark \ref{Rem_EW}). We also show in Corollary
\ref{Cor_unsym} that the usual (unsymmetrised) time delay exists and is equal to the
symmetrised time delay if the scattering process preserves the norm of the velocity
vector $\nabla H_0$. Finally, we establish in Section \ref{Sec_Cal} a link between
our results on the whole manifold $M$ and the results of \cite{BP09} on fixed energy
submanifolds $\Sigma_E^0:=H_0^{-1}(\{E\})$. Under appropriate assumptions on the
energy $E\in\R$ and the perturbation $V$, we show that the abstract time delay
$\tau_E$ defined in \cite{BP09} as the difference of distance from a Poincar\'e
section $\Gamma_E\subset\Sigma_E^0$ before and after scattering coincides with the
restriction $\tau|_{\Gamma_E}$ if
$$
\Gamma_E=\big\{m\in\Sigma^0_E\mid(\Phi\cdot\nabla H_0)(m)=0\big\}.
$$
In other terms, if there exist position observables $\Phi$ satisfying
\eqref{Eq_bracket}, then there exist natural Poincar\'e sections $\Gamma_E$ verifying
the assumptions of \cite{BP09}, and our general time-dependent definition of time
delay coincides, after restriction to $\Gamma_E$, with the abstract time-independent
definition of time delay of \cite{BP09}. This establishes a new relation between two
complementary formulations of classical scattering theory. Furthermore, by using a
theorem of \cite{BP09} linking $\tau_E$ to the Calabi invariant
$\mathrm{Cal}\big(\widetilde S_E\big)$ of the Poincar\'e scattering map
$\widetilde S_E$, this leads to an explicit expression in terms of $\Phi$, $S$ and
$\nabla H_0$ for the derivative
$\frac\d{\d E}\;\!\mathrm{Cal}\big(\widetilde S_E\big)$ of the Calabi invariant
(Theorem \ref{Thm_Cal}).

To conclude, we point out several aspects of independent interest in the paper. In
the first place, our results allow to treat three new classes of Hamiltonians
systems\;\!: dispersive Hamiltonians $H_0(q,p)=h(p)$ on $\R^{2n}$, a classical
particle in a tube and the kinetic energy Hamiltonian on the Poincar\'e ball. The
first example generalises the case $H_0(q,p)=|p|^2/2$, the second example illustrates
the fact that in general only the symmetrised time delay exists, and the last example
shows that our results also apply to geodesic flows on manifolds with curvature.
We also note that our treatment of classical scattering theory in Section
\ref{Sec_Clas_Scat} is model-independent, and therefore possibly of some
use in other contexts. Finally, we note that the present paper provides a new
example of results valid both in quantum and classical mechanics. Accordingly, we try
to put into light throughout all of the paper the relation between both theories.

%--------------------------------------------------------------------------------------
\section{Classical scattering theory}\label{Sec_Clas_Scat}
\setcounter{equation}{0}
%--------------------------------------------------------------------------------------

In this section, we introduce a classical scattering pair $(H_0,H)$ on a symplectic
manifold $M$ and a family of position type observables
$\Phi\equiv(\Phi_1,\ldots,\Phi_d)$ satisfying the Poisson bracket relation \eqref{Eq_bracket}. Then, we recall some results on $H_0$ following from the existence of the
family $\Phi$. Finally, we extend standard results on the scattering theory for
$H_0(q,p)=|p|^2/2$, $H(q,p)=|p|^2/2+V(q)$ and $\Phi(q,p)=q$ to the abstract triple
$(H_0,H,\Phi)$.

Throughout the paper, we use the notations $\R_+:=(0,\infty)$ and $\R_-:=(-\infty,0)$, and we write $\zeta^*$ for the pullback of a
diffeomorphism of manifolds $\zeta:M_1\to M_2$.

%--------------------------------------------------------------------------------------
\subsection{Free Hamiltonian and position observables}\label{Sec_Free_Ham}
%--------------------------------------------------------------------------------------

Let $M$ be a symplectic manifold, \ie a smooth manifold endowed with a closed
two-form $\omega$ such that the morphism
$
TM\ni X\mapsto\omega^\flat(X):=\iota_X\omega
$
is an isomorphism. In infinite dimension, such a manifold is said to be a strong
symplectic manifold (in opposition to a weak symplectic manifold, when the above map
is only injective; see \cite[Sec.~8.1]{AMR88}). When the dimension is finite, the
dimension must be even, say equal to $2n$, and the $2n$-form
$\omega^n:=\omega\wedge\cdots\wedge\omega$ must be a volume form. The Poisson bracket
is defined as follows: for each $f\in\cinf(M)$ we define the vector field
$X_f:=(\omega^\flat)^{-1}(\d f)$, \ie
$\d f(\;\!\cdot\;\!)=\omega(X_f,\;\!\cdot\;\!)$, and set $\{f,g\}:=\omega(X_f,X_g)$
for each $f,g\in\cinf(M)$.

In the sequel, the function $H_0\in\cinf(M)$ is an Hamiltonian with complete vector
field $X_{H_0}$. So, the flow $\{\varphi^0_t\}$ associated to $H_0$ is defined for
all $t\in\R$, it preserves the Poisson bracket, \ie
$$
\big\{f\circ\varphi^0_t,g\circ\varphi^0_t\big\}=\{f,g\}\circ\varphi^0_t,\quad
t\in\R,
$$
and it satisfies the usual evolution equation\;\!:
$$
\frac\d{\d t}\;\!f\circ\varphi^0_t=\big\{f,H_0\big\}\circ\varphi^0_t,\quad t\in\R.
$$
In particular, the Hamiltonian $H_0$ is preserved along its flow, \ie
$$
H_0\circ\varphi^0_t=H_0,\quad t\in\R.
$$

As in \cite[Sec.~3]{GT11}, we consider an additional family  
$\Phi\equiv(\Phi_1,\ldots,\Phi_d)\in\cinf(M;\R^d)$ of observables, and define
the
associated functions
$$
\partial_jH_0:=\{\Phi_j,H_0\}\in\cinf(M)\qquad\hbox{and}\qquad
\nabla
H_0\equiv\{\Phi,H_0\}:=(\partial_1H_0,\ldots,\partial_dH_0)\in\cinf(M;\R^d),
$$
and the corresponding set of critical points\;\!:
$$
\Crit(H_0,\Phi):=(\nabla H_0)^{-1}(\{0\})\subset M.
$$
The set $\Crit(H_0,\Phi)$ is closed and contains the usual set $\Crit(H_0)$ of
critical points of $H_0$, \ie
$$
\Crit(H_0,\Phi)
\supset\Crit(H_0)
:=\big\{m\in M\mid X_{H_0}(m)=0\big\}
\equiv\big\{m\in M\mid(\d H_0)_m=0\big\}.
$$

Our first assumption is the following\;\!:

\begin{Assumption}[Position observables]\label{AssCom}
One has $\big\{\{\Phi_j,H_0\},H_0\big\}=0\,$ for each $j\in\{1,\ldots,d\}$.
\end{Assumption}

Assumption \ref{AssCom} is verified by many free Hamiltonian systems $(M,\omega,H_0)$
appearing in literature (see \cite[Sec.~4]{GT11} for both finite and infinite
dimensional examples). It implies that the time evolution of the observables $\Phi_j$
under the flow $\{\varphi^0_t\}_{t\in\R}$ is lineal with growth rate $\partial_jH_0$;
namely,
\begin{equation}\label{Eq_linear}
\big(\Phi_j\circ\varphi^0_t\big)(m)=\Phi_j(m)+t\big(\partial_jH_0\big)(m)
\quad\hbox{for all $j\in\{1,\ldots,d\}$, $t\in\R$ and $m\in M$.}
\end{equation}
Furthermore, one has for each $t\in\R$
\begin{equation}\label{Eq_linear_bis}
\varphi^0_t\big(\Crit(H_0,\Phi)\big)=\Crit(H_0,\Phi)
\qquad\hbox{and}\qquad
\varphi^0_t\big(M\setminus\Crit(H_0,\Phi)\big)=M\setminus\Crit(H_0,\Phi),
\end{equation}
and if $m\in M\setminus\Crit(H_0,\Phi)$, then one must have $\varphi^0_t(m)\neq m$
for all $t\neq0$ due to Equation \eqref{Eq_linear}. So, each orbit
$\{\varphi^0_t(m)\}_{t\in\R}$ either stays in $\Crit(H_0,\Phi)$ if
$m\in\Crit(H_0,\Phi)$, or stays outside $\Crit(H_0,\Phi)$ and is not periodic if
$m\notin\Crit(H_0,\Phi)$.

Assumption \ref{AssCom} also permits to relate the difference of the sojourn
times (in
the past and in the future) of a classical orbit $\{\varphi^0_t(m)\}_{t\in\R}$
in the
dilated regions $\Phi^{-1}(B_r)\subset M$, $r>0$, to a finite arrival time
defined in
terms of $\Phi$ and $H_0$. To see this, let $T:M\setminus\Crit(H_0,\Phi)\to\R$
be the
$\cinf$-function given by
\begin{equation}\label{def_T}
T:=\Phi\cdot\frac{\nabla H_0}{|\nabla H_0|^2}\;\!,
\end{equation}
and then observe the following (see \cite[Sec.~3.1-3.2]{GT11} for details)\;\!:
\begin{enumerate}
\item[(i)] One has for each $t\in\R$
$$
\big\{T,H_0\big\}\circ\varphi^0_t\equiv\frac\d{\d t}\big(T\circ\varphi^0_t\big)=1
\qquad\hbox{and}\qquad
T\circ\varphi^0_t=T+t
$$
on $M\setminus\Crit(H_0,\Phi)$.
\item[(ii)] If we consider the observables $\Phi_j$ as the components of an abstract
position observable $\Phi$, then $\nabla H_0$ can be seen as the velocity vector for
the Hamiltonian $H_0$, and $-T(m)$ is equal to the time at which a particle in $\R^n$
with initial position $\Phi(m)$ and velocity $(\nabla H_0)(m)$ intersects the
hyperplane (containing the origin) orthogonal to the unit vector
$\frac{(\nabla H_0)(m)}{|(\nabla H_0)(m)|}$\;\!. For instance, if $\Phi(q,p)=q$ and
$H_0(q,p)=|p|^2/2$ are the usual position and kinetic energy on $M=T^*\R^n$, then
$-T(q,p)\equiv-q\cdot p/|p|^2$ is known in physics literature as the arrival time of
the free particle (see \eg \cite[Sec.~II.E]{Sas10}).
\end{enumerate}

Accordingly, the observable $T$ represents a time of arrival growing linearly under
the flow $\{\varphi^0_t\}_{t\in\R}$ (in quantum mechanics, the relations
$\{T,H_0\}=1$ and  $T\circ\varphi^0_t=T+t$ are replaced by the canonical commutation
relation $[T,H_0]=i$ and the Weyl relation $\e^{itH_0}T\e^{-itH_0}=T+t$, and the
corresponding operator $T$ is called a time operator for $H_0$; see \cite{Ara05} or
\cite{Miy01} for details). It only remains to link the observable $T$ to the sojourn
times of the classical orbits in the regions $\Phi^{-1}(B_r)$. This is the content of
the next theorem, proved in Section 3.2 of \cite{GT11}\;\!:

\begin{Theorem}\label{Cor_car}
Let $H_0$ and $\Phi$ satisfy Assumption \ref{AssCom}. Then we have
$$
\lim_{r\to\infty}\frac12\int_0^\infty\d t\,
\big\{\big(\chi_r^\Phi\circ\varphi_{-t}^0\big)(m)
-\big(\chi_r^\Phi\circ\varphi_t^0\big)(m)\big\}
=
\begin{cases}
T(m) & \hbox{if }\,m\in M\setminus\Crit(H_0,\Phi)\\
0 & \hbox{if }\,m\in\Crit(H_0,\Phi),
\end{cases}
$$
with $\chi_r^\Phi$ the characteristic function for the set $\,\Phi^{-1}(B_r)$.
\end{Theorem}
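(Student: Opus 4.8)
The plan is to collapse the whole statement onto one elementary fact: under Assumption \ref{AssCom} the motion $t\mapsto\Phi(\varphi^0_t(m))$ is \emph{affine}, so the sojourn time in $\Phi^{-1}(B_r)$ is just the time a straight line spends inside a Euclidean ball, and the symmetrised difference of such times is read off from the midpoint of the resulting chord. Fix $m\in M$ and abbreviate $a:=\Phi(m)\in\R^d$ and $v:=(\nabla H_0)(m)\in\R^d$. By the linearity relation \eqref{Eq_linear} one has $\Phi(\varphi^0_t(m))=a+tv$ for every $t$, hence $(\chi_r^\Phi\circ\varphi^0_t)(m)=\one_{\{|a+tv|\le r\}}$, and the entire integrand is controlled by the scalar quadratic $t\mapsto|a+tv|^2$. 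The critical case is then immediate: if $m\in\Crit(H_0,\Phi)$ then $v=0$, so $(\chi_r^\Phi\circ\varphi^0_t)(m)=\one_{\{|a|\le r\}}$ is independent of $t$, the integrand vanishes identically, and the integral is $0$ for every $r$.

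Assume now $m\in M\setminus\Crit(H_0,\Phi)$, i.e.\ $v\neq0$. Since $|a+tv|^2=|v|^2t^2+2(a\cdot v)\,t+|a|^2$ is an upward parabola in $t$, the sublevel set $\{t:|a+tv|\le r\}$ is, for all $r$ large enough, a bounded closed interval $[t_-,t_+]$ whose endpoints are the two real roots of $|a+tv|^2=r^2$; in particular each orbit spends only finite time in $\Phi^{-1}(B_r)$, so the integral is well defined. By Vieta's formula $t_-+t_+=-2(a\cdot v)/|v|^2$, so the midpoint of the chord is $\tfrac12(t_-+t_+)=-(a\cdot v)/|v|^2=-T(m)$, using the definition \eqref{def_T} of $T$ together with $T(m)=a\cdot v/|v|^2$. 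Geometrically this midpoint is simply the instant of closest approach of the line $a+tv$ to the centre of the ball.

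It remains to evaluate the integral. Writing $g(t):=\one_{[t_-,t_+]}(t)$, the left-hand side of the claimed identity equals $\tfrac12\int_0^\infty\{g(-t)-g(t)\}\,\d t=\tfrac12\big(\int_{-\infty}^0 g\,\d t-\int_0^\infty g\,\d t\big)$. For $r$ so large that $t_-<0<t_+$ --- which holds once $r>|a|$, since then $t=0$ lies strictly inside the chord --- one has $\int_{-\infty}^0 g\,\d t=-t_-$ and $\int_0^\infty g\,\d t=t_+$, so the expression collapses to $-\tfrac12(t_-+t_+)=T(m)$. Thus the integral is in fact \emph{constant, equal to $T(m)$, for all sufficiently large $r$}, and the limit follows at once.

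The only genuine obstacle is the bookkeeping that isolates the single chord $[t_-,t_+]$ and places the time origin strictly inside it for large $r$; once \eqref{Eq_linear} has straightened the motion, the two one-sided integrals are elementary and the Vieta identity delivers $T(m)$ with no estimates required. Two routine checks round things off: finiteness of the integrals for each fixed $r$ (guaranteed by $v\neq0$, which sends the line to infinity), and the degenerate Cauchy--Schwarz case $|a|^2|v|^2=(a\cdot v)^2$ --- the line through the centre --- where the chord is simply the symmetric interval about $-T(m)$ and the computation goes through verbatim.
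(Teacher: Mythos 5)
Your proof is correct, and it follows essentially the same route as the paper's: the paper defers the proof of Theorem \ref{Cor_car} to Section 3.2 of \cite{GT11}, where the argument likewise uses \eqref{Eq_linear} to straighten the orbit into the line $t\mapsto a+tv$ and then computes the entry/exit times of that line in the ball $B_r$ explicitly (this is visible in the chord-length formula from \cite[Lemma~2.4]{GT11} quoted in the proof of Corollary \ref{Cor_unsym}). Your Vieta midpoint observation, which shows the symmetrised integral equals $T(m)$ exactly for all $r>|\Phi(m)|$ without ever writing the square roots, is a pleasant streamlining of the same computation rather than a different method.
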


We conclude the section by exhibiting three examples which will serve as connecting
thread throughout the whole paper. However, we note that many other examples are
certainly accessible as suggested by \cite[Sec.~4]{GT11} (for instance, it would be
interesting to study the cases of the nonlinear Klein-Gordon and Schr\"odinger
equations where the scattering theory is well defined, see \cite{Nak99,Nak01}). The
notation $\langle y\rangle:=\sqrt{1+|y|^2}$ is used for any $y\in\R^n$.

\begin{Example}[$H_0(q,p)=h(p)$]\label{Ex_hp_1}
Consider on $M:=T^*\R^n\simeq\R^{2n}$ the canonical coordinates $(q,p)$, with
$q\equiv(q^1,\ldots,q^n)$ and $p\equiv(p_1,\ldots,p_n)$, and the canonical
symplectic
form $\omega:=\sum_{j=1}^n\d q^j\wedge\d p_j$. Take a purely kinetic Hamiltonian
$H_0(q,p):=h(p)$ with $h\in\cinf(\R^n;\R)$, and let $\Phi_j(q,p):=q^j$ be the
usual
position functions. Then $\varphi^0_t(q,p)=\big(q+t(\nabla h)(p),p\big)$,
$\nabla H_0=\nabla h$, and Assumption \ref{AssCom} is satisfied:
$$
\big\{\{\Phi_j,H_0\},H_0\big\}(q,p)
=\big\{(\partial_jh)(p),h(p)\big\}
=0.
$$
Furthermore, we have
$\Crit(H_0)=\Crit(H_0,\Phi)=\R^n\times(\nabla h)^{-1}(\{0\})$.
\end{Example}

\begin{Example}[Particle in a tube]\label{Ex_tube_1}
Let $\Omega:=\R\times\mathring B_1$ be a straight tube with section
$\mathring B_1:=\big\{q_\perp\in\R^n\mid|q|<1\big\}$, and endow
$M:=T^*\Omega\simeq\Omega\times\R^{n+1}$ with the coordinates
$q\equiv(q^1,q_\perp)\in\Omega$ and $p\equiv(p_1,p_\perp)\in\R^{n+1}$ and with the
symplectic form $\omega:=\sum_{j=1}^{n+1}\d q^j\wedge\d p_j$. Then, consider the
Hamiltonian given by the map
$$
H_0:M\to\R,\quad(q,p)\mapsto|p|^2/2+v_0\big(|q_\perp|^2\big),
$$
where $v_0\in\cinf\big((0,1)\big)$ satisfies
\begin{enumerate}
\item[(i)] $v_0\equiv0$ in a neighbourhood of $\,0\in\R$,
\item[(ii)] $v_0'(x)\ge0$ for all $x\in(0,1)$,
\item[(iii)] $\lim_{x\nearrow1}v_0(x)=+\infty$.
\end{enumerate}
The Hamiltonian $H_0$ models a classical particle in the tube $\Omega$ evolving under
the influence of a confining potential $v_0$, which repels the particle near the
boundary of $\Omega$. The motion of the particle is uniform along the $q^1$-axis and
given by the equation $\ddot q_\perp=-2q_\perp v_0'\big(|q_\perp|^2\big)$ in the
transverse direction. To show the completeness of the corresponding Hamiltonian
vector field
$$
X_{H_0}(q,p)
=p\;\!\frac\partial{\partial q}\bigg|_{(q,p)}
-2q_\perp v_0'\big(|q_\perp|^2\big)\frac\partial{\partial p_\perp}\bigg|_{(q,p)},
\quad(q,p)\in M,
$$
one can for instance use the criterion \cite[Prop.~2.1.20]{AM78} with the (proper and
$\cinf$\!) function
\begin{equation}\label{Func_tube}
f:M\to\R,\quad(q,p)\mapsto H_0(q,p)+\big\langle q^1\big\rangle.
\end{equation}
As a function $\Phi$, we take the longitudinal coordinate $\Phi(q,p):=q^1$. This gives
$(\nabla H_0)(q,p)=p_1$ as velocity observable and implies that
$$
\big\{\{\Phi,H_0\},H_0\big\}(q,p)
=\big\{p_1,|p|^2/2+v_0\big(|q_\perp|^2\big)\big\}
=0.
$$
So, Assumption \ref{AssCom} is satisfied and
$$
\Crit(H_0)
=\big(\R\times(v_0')^{-1}(\{0\})\big)\times\{0\}
\subset\Omega\times\big(\{0\}\times\R^n\big)
=\Crit(H_0,\Phi).
$$
\end{Example}

\begin{Example}[Poincar\'e ball]\label{Ex_Poin_1}
Consider the open unit ball $\mathring B_1\equiv\big\{q\in\R^n\mid|q|<1\big\}$
endowed with the Riemannian metric $g$ given by
$$
g_q(X_q,Y_q):=\frac4{(1-|q|^2)^2}\;\!(X_q\cdot Y_q),
\quad q\in\mathring B_1,~X_q,Y_q\in T_q\mathring B_1\simeq\R^n.
$$
Let $T^*\mathring B_1\simeq\big\{(q,p)\in\mathring B_1\times\R^n\big\}$ be the
cotangent bundle on $\mathring B_1$ with symplectic form
$\omega:=\sum_{j=1}^n\d q^j\wedge\d p_j$, and let
$$
H_0:T^*\mathring B_1\to\R,\quad(q,p)\mapsto\frac12\sum_{j,k=1}^ng^{jk}(q)p_jp_k
=\frac18\;\!|p|^2\big(1-|q|^2\big)^2
$$
be the kinetic energy Hamiltonian. Then, we know from \cite[Sec.~4.3(D)]{GT11} that
the Hamiltonian vector field $X_{H_0}$ is complete on
$
M:=T^*\mathring B_1\setminus H_0^{-1}(\{0\})
\simeq\mathring B_1\times\R^n\setminus\{0\}
$
and that the function
$$
\Phi:M\to\R,\quad(q,p)\mapsto
\tanh^{-1}\left(\frac{2(p\cdot q)}{|p|(1+|q|^2)}\right)
$$
is $\cinf$ and satisfies Assumption \ref{AssCom} with $\nabla H_0=\sqrt{2H_0}$ and
$\Crit(H)=\Crit(H,\Phi)=\varnothing$.

The observable $\Phi$ can be interpreted (\;\!\eg using isometries) as follows. Let
$\widetilde q$ be the closest point to the origin $0\in\mathring B_1$ on the geodesic
curve generated by $(q,p)$. Then $\Phi(q,p)$ is the geodesic distance between $q$ and
$\widetilde q$ together with a sign (positive if going from $\widetilde q$ to $q$
goes in the same direction as $p$ and negative otherwise).
\end{Example}

%--------------------------------------------------------------------------------------
\subsection{Wave maps and scattering map}\label{Sec_wave}
%--------------------------------------------------------------------------------------

From now on, we also consider a perturbed Hamiltonian $H\in\cinf(M)$ with complete
flow $\{\varphi_t\}_{t\in\R}$, and suppose for a moment that the pair $(H_0,H)$ is
such that the wave maps exist, have common ranges and are invertible\;\!:

\begin{Assumption}[Wave maps]\label{Ass_wave}
\begin{enumerate}
\item[]
\item[(i)] The pointwise limits
$W_\pm:=\lim_{t\to\pm\infty}\varphi_{-t}\circ\varphi_t^0$ exist on some open
sets
$\D_\pm\subset M$.
\item[(ii)] The maps $W_\pm$ are invertible, with inverses
$W_\pm^{-1}:\Ran(W_\pm) \to\D_\pm$.
\item[(iii)] The maps $W_\pm$ have common ranges equal to $\RR$, \ie
$\Ran(W_+)=\Ran(W_-)=\RR$.
\end{enumerate}
\end{Assumption}

The functions $W_\pm:\D_\pm\to\RR$ are called the wave maps and the condition
(ii)
is often referred as completeness of the wave maps. Since the flows
$\{\varphi_t^0\}_{t\in\R}$ and $\{\varphi_t\}_{t\in\R}$ are groups of
diffeomorphisms
on $M$, the wave maps $W_\pm$ verify the intertwining property
\begin{equation}\label{inter}
\big(\varphi_t\circ W_\pm\big)(m_\pm)=\big(W_\pm\circ\varphi_t^0\big)(m_\pm)
\quad\hbox{for all }t\in\R\hbox{ and }m_\pm\in\D_\pm.
\end{equation}
Due to points (ii) and (iii) of Assumption \ref{Ass_wave} the scattering map
$$
S:=W_+^{-1}\circ W_-
$$
is well defined and invertible from $\D_-$ to $\D_+$. Furthermore, the intertwining
property \eqref{inter} implies that $S$ commutes with the free evolution\;\!:
\begin{equation}\label{commutation_S}
\big(\varphi_t^0\circ S\big)(m_-)=\big(S\circ\varphi_t^0\big)(m_-)
\quad\hbox{for all }t\in\R\hbox{ and }m_-\in\D_-.
\end{equation}

In finite dimension, what precedes admits an interesting formulation (borrowed from
\cite[Sec.~2.3]{BP09}) on submanifolds of fixed energy. To see this, let $E\in\R$ be
a regular value of $H_0$, \ie $H_0^{-1}(\{E\})\cap\Crit(H_0)=\varnothing$. Then,
$\Sigma^0_E:=H_0^{-1}(\{E\})$ is a regular submanifold of $M$ of dimension $2n-1$,
the family $\big\{\varphi^{0,E}_t:=\varphi^0_t|_{\Sigma^0_E}\big\}_{t\in\R}$ is a
group of diffeomorphisms on $\Sigma^0_E$ and the group action
$$
\varphi^{0,E}:\R\times\Sigma^0_E\to\Sigma^0_E,~~(t,m)\mapsto\varphi^{0,E}_t(m)
$$
is smooth. Now, if $\varphi^{0,E}$ is free and proper, then the quotient (orbit)
space $\widetilde\Sigma^0_E:=\Sigma^0_E/\R$ is a smooth manifold of dimension
$2(n-1)$ and the quotient map $\pi^0_E:\Sigma^0_E\to\widetilde\Sigma^0_E$ is a
submersion \cite[Prop.~4.1.23]{AM78}. Furthermore, there exists a unique symplectic
form $\widetilde\omega^0_E$ on $\widetilde\Sigma^0_E$ such that 
$(\pi^0_E)^*\widetilde\omega^0_E=\omega|_{\Sigma^0_E}$ (see \cite[Thm.~4.3.1 \&
Ex.~4.3.4(ii)]{AM78}). If the situation is favourable enough, it may also happen that
$\Sigma^0_E\subset\D_-$ and that the restriction $S_E:=S|_{\Sigma^0_E}$ is a
diffeomorphism from $\Sigma^0_E$ onto $\Sigma^0_E$ such that
\begin{equation}\label{S_E_rest}
S_E^*\big(\omega|_{\Sigma^0_E}\big)=\omega|_{\Sigma^0_E}.
\end{equation}
In such a case, the map $S_E:\Sigma^0_E\to\Sigma^0_E$ is equivariant with respect to
the action $\varphi^{0,E}$ due to \eqref{commutation_S}, and thus induces a
diffeomorphism $\widetilde S_E:\widetilde\Sigma^0_E\to\widetilde\Sigma^0_E$ defined
by the relation
\begin{equation}\label{S_E_diagram}
\pi_E^0\circ S_E=\widetilde S_E\circ\pi_E^0.
\end{equation}
Furthermore, one obtains from \eqref{S_E_rest} and \eqref{S_E_diagram} that
$$
(\pi_E^0)^*\big((\widetilde
S_E)^*\widetilde\omega^0_E-\widetilde\omega^0_E\big)=0,
$$
meaning that $\widetilde S_E^*\widetilde\omega^0_E=\widetilde\omega^0_E$, since
$\pi_E^0$ is a surjection. This implies that $\widetilde S_E$ is a symplectomorphism
of $\big(\widetilde\Sigma^0_E,\widetilde\omega^0_E\big)$. In the case $n=1$, the
above reduction leads to a manifold $\widetilde\Sigma^0_E$ of dimension zero, \ie a
discrete set of orbits. So, the map $\widetilde S_E$ reduces to a permutation map on
the discrete set $\widetilde\Sigma^0_E$.

The map $\widetilde S_E$, called the Poincar\'e scattering map, will play an
important role in Section \ref{Sec_Cal}. We refer to \cite{DS92,Jun86,RS95} and
\cite{Kna99} for early works involving the Poincar\'e scattering map in physics and
mathematics literature respectively.

%--------------------------------------------------------------------------------------
\subsection{Completeness of the wave maps}
%--------------------------------------------------------------------------------------

Conditions on the free Hamiltonian $H_0$ and the potential $V:=H-H_0$ guaranteeing
the validity of Assumption \ref{Ass_wave} will now be presented. These conditions are
natural extensions of the compactness of the support of the potential and the virial
condition appearing in the case $\Phi(q,p)=q$, $H_0(q,p):=|p|^2/2$ and
$H(q,p)=|p|^2/2+V(q)$ on $M=T^*\R^n$. As in Section \ref{Sec_wave}, we always assume
that the Hamiltonians $H_0$ and $H$ have complete flows $\{\varphi^0_t\}_{t\in\R}$
and $\{\varphi_t\}_{t\in\R}$.

We start below with a result on the domain and the range of the wave maps $W_\pm$. For
this, we need to introduce the sets of $\Phi$-bounded trajectories
$$
B_\Phi^\pm:=\big\{m\in M\mid\exists\;\!R\ge0\hbox{ such that }
\big|\Phi\big(\varphi_{\pm t}(m)\big)\big|\le R\hbox{ for all }t\ge0\big\}.
$$
The sets $B_\Phi^\pm$ coincide with the usual sets of bounded trajectories
\cite[Def.~2.1.1]{DG97} if the map $|\Phi|:M\to[0,\infty)$ is proper. If it is
not the case, the inclusion $m\in B_\Phi^-\cap B_\Phi^+$ does not guarantee that the
orbit $\{\varphi_t(m)\}_{t\in\R}$ stays in a compact subset of $M$ (consider for
instance the case $M=T^*\R^2$, $\Phi(q,p)=q_1$ and $H(q,p):=|p|^2/2$).

We also need the following assumption on the potential\;\!:

\begin{Assumption}[Potential]\label{Ass_pot}
The difference $V\equiv H-H_0$ is of bounded support in $\Phi$, \ie there exists a
constant $R_V\ge0$ such that $|\Phi(m)|\le R_V$ for all $m\in\supp(V)$.
\end{Assumption}

In explicit situations, Assumption \ref{Ass_pot} can often be relaxed to some
condition on the decay at infinity (see for instance \cite[Sec.~2.6-2.7]{DG97},
\cite{Her74}, \cite{Hun68}, \cite[Sec.~3]{JO91} and \cite{Sim71}).

\begin{Theorem}[Existence of wave maps]\label{t-thithi}
Let $H_0$ and $H$ satisfy Assumptions \ref{AssCom} and \ref{Ass_pot}. Then $W_\pm$
exist and are symplectomorphisms from $M\setminus\Crit(H_0,\Phi)$ to
$M\setminus B_\Phi^\pm$. Furthermore, one has $H\circ W_{\pm}=H_0$ on
$M\setminus\Crit(H_0,\Phi)$ (and thus $H=H_0\circ W_{\pm}^{-1}$ on $M\setminus
B_\Phi^\pm$).
\end{Theorem}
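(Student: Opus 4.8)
\emph{Proof plan.} The plan is to exploit that the bounded-support Assumption \ref{Ass_pot} makes the wave maps \emph{eventually constant} rather than genuine limits, which is the engine of the whole argument. First I would write $\Omega_t:=\varphi_{-t}\circ\varphi_t^0$ and fix $m\in M\setminus\Crit(H_0,\Phi)$, so that $(\nabla H_0)(m)\neq0$. By \eqref{Eq_linear} the map $t\mapsto\big|\Phi\big(\varphi_t^0(m)\big)\big|^2=\big|\Phi(m)+t(\nabla H_0)(m)\big|^2$ is a nondegenerate parabola in $t$; hence there is $t_+\ge0$ with $\big|\Phi\big(\varphi_t^0(m)\big)\big|>R_V$, and increasing, for all $t\ge t_+$. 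Since $\supp(V)\subset\{|\Phi|\le R_V\}$, the free orbit satisfies $\varphi_t^0(m)\notin\supp(V)$ for every $t\ge t_+$. On the open set $M\setminus\supp(V)$ one has $X_H=X_{H_0}$, so uniqueness of integral curves forces $\varphi_\tau\big(\varphi_{t_+}^0(m)\big)=\varphi_\tau^0\big(\varphi_{t_+}^0(m)\big)$ for all $\tau\ge0$, whence $\Omega_t(m)=\Omega_{t_+}(m)$ for all $t\ge t_+$. This gives existence of $W_+(m)=\Omega_{t_+}(m)$; the case of $W_-$ is symmetric with $t\to-\infty$.

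Next I would upgrade this to smoothness and the symplectic property. The continuity of $T$ in \eqref{def_T} and a local lower bound on $|\nabla H_0|$ let me choose $t_+$ \emph{uniformly} on a neighbourhood $\mathcal N$ of any fixed $m_0\notin\Crit(H_0,\Phi)$, so that $W_+|_{\mathcal N}=\Omega_{t_+}|_{\mathcal N}$. As $\Omega_{t_+}=\varphi_{-t_+}\circ\varphi_{t_+}^0$ is a composition of Hamiltonian flows, it is a diffeomorphism of $M$ with $\Omega_{t_+}^*\omega=\omega$; hence $W_+$ is smooth, a local diffeomorphism, and $W_+^*\omega=\omega$. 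Injectivity is immediate: if $W_+(m_1)=W_+(m_2)=w$ then $\varphi_t^0(m_1)=\varphi_t(w)=\varphi_t^0(m_2)$ for all large $t$, so $m_1=m_2$. Energy conservation follows just as directly: with $w=W_+(m)$ and $t\ge t_+$ one has $\varphi_t(w)=\varphi_t^0(m)\notin\supp(V)$, so, using that $H$ and $H_0$ are constant along their own flows and that $H=H_0$ off $\supp(V)$, $H(w)=H\big(\varphi_t(w)\big)=H_0\big(\varphi_t^0(m)\big)=H_0(m)$. This is $H\circ W_+=H_0$ on $M\setminus\Crit(H_0,\Phi)$, and $H=H_0\circ W_+^{-1}$ on the range by substitution.

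It remains to identify the range with $M\setminus B_\Phi^+$. One inclusion is the computation just made: for $w=W_+(m)$ the forward orbit $\varphi_t(w)=\varphi_t^0(m)$ is $\Phi$-unbounded by \eqref{Eq_linear}, so $w\notin B_\Phi^+$ and $\Ran(W_+)\subset M\setminus B_\Phi^+$. For the reverse inclusion I would run the construction backwards: given $w\notin B_\Phi^+$, I seek $t_+\ge0$ with $\varphi_t(w)\notin\supp(V)$ for all $t\ge t_+$; then $m:=\varphi_{-t_+}^0\big(\varphi_{t_+}(w)\big)$ lies off $\Crit(H_0,\Phi)$ (its free orbit equals $\varphi_t(w)$ for $t\ge t_+$, which is $\Phi$-unbounded) and satisfies $W_+(m)=w$ by the same eventual-constancy argument applied to $\varphi_{-t}^0\circ\varphi_t$. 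Combined with injectivity and openness of $W_+$, this makes $W_\pm$ symplectomorphisms onto $M\setminus B_\Phi^\pm$.

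I expect this last point — that a forward $\Phi$-unbounded \emph{perturbed} orbit must eventually leave $\supp(V)$ once and for all — to be the main obstacle, since, unlike the free flow, the perturbed flow can in principle re-enter $\{|\Phi|\le R_V\}$ after exiting. The tool I would use is the virial identity produced by Assumption \ref{AssCom}: for $A:=\Phi\cdot\nabla H_0$ one computes $\{A,H_0\}=\sum_j(\partial_jH_0)^2=|\nabla H_0|^2$, because $\{\partial_jH_0,H_0\}=\big\{\{\Phi_j,H_0\},H_0\big\}=0$. Thus $A$ is strictly increasing along every free orbit off $\Crit(H_0,\Phi)$, while $\{A,V\}$ is supported in $\supp(V)$; this monotonicity controls the orbit on the complement of $\supp(V)$ and should rule out infinitely many re-entries for a $\Phi$-unbounded orbit, thereby furnishing the last exit time $t_+$ (this is the virial mechanism isolated in Lemma \ref{t-condvir}).
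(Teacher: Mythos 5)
Your treatment of existence, smoothness, the symplectic property, injectivity, energy conservation, and the inclusion $\Ran(W_\pm)\subset M\setminus B_\Phi^\pm$ coincides, step for step, with the paper's proof (the paper phrases the eventual constancy of $\varphi_{-t}\circ\varphi^0_t$ on compact subsets of $M\setminus\Crit(H_0,\Phi)$, you phrase it pointwise plus a locally uniform choice of $t_+$; same argument). The genuine gap is exactly the point you defer: the inclusion $M\setminus B_\Phi^\pm\subset\Ran(W_\pm)$, i.e.\ that a forward $\Phi$-unbounded perturbed orbit leaves $\supp(V)$ once and for all. Your proposal for this step is a hope (``should rule out infinitely many re-entries''), not an argument, and the mechanism you invoke --- ``the virial mechanism isolated in Lemma \ref{t-condvir}'' --- is not available here: that lemma rests on the global hypothesis $\{\Phi\cdot\nabla H_0,H\}>\delta$ on all of $M$, which Theorem \ref{t-thithi} does not assume. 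Under Assumptions \ref{AssCom} and \ref{Ass_pot} alone, $\{\Phi\cdot\nabla H_0,V\}$ is completely uncontrolled on $\supp(V)$, so $A:=\Phi\cdot\nabla H_0$ has no global monotonicity along the perturbed flow; and monotonicity of $A$ on the free segments alone proves nothing until you know the \emph{sign} of $A$ at the moment the orbit leaves the interaction region.

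What closes the gap (and is the bulk of the paper's proof) is a first-exit argument on a slightly enlarged ball. Fix $R>R_V$ (the paper takes $R=R_V+1$) and let $w\notin B_\Phi^+$. If the forward orbit never meets $\{|\Phi|\le R\}$ it is free from time $0$ and $w=W_+(w)$ directly; otherwise let $t_0$ be the first exit time from that ball. Near $t_0$ the orbit lies in $\{|\Phi|>R_V\}$, hence outside $\supp(V)$, hence it is locally free and $s\mapsto\big|\Phi\big(\varphi_{t_0+s}(w)\big)\big|^2$ is locally the parabola $R^2+2sA\big(\varphi_{t_0}(w)\big)+s^2\big|(\nabla H_0)\big(\varphi_{t_0}(w)\big)\big|^2$. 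Since this is $\le R^2$ for small $s<0$ (definition of first exit), one gets $A\big(\varphi_{t_0}(w)\big)\ge0$; the degenerate case $(\nabla H_0)\big(\varphi_{t_0}(w)\big)=0$ (which forces $A=0$ there) is excluded because it would freeze $|\Phi|$ at the value $R$ along a forever-free orbit, contradicting unboundedness. Hence the parabola is increasing for $s\ge0$, so the orbit stays in $\{|\Phi|\ge R\}$, remains free for all later times, and never re-enters: a single exit is all that can occur, not merely ``finitely many re-entries''. Only after this sign observation does your monotonicity of $A$ (equivalently, convexity of $|\Phi|^2$ along free motion, via Equation \eqref{Eq_linear}) have anything to propagate; the construction of the preimage $m=\varphi^0_{-t_0}\big(\varphi_{t_0}(w)\big)$ with $m\notin\Crit(H_0,\Phi)$ and $W_+(m)=w$ then goes through as you wrote. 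Note finally that the enlargement $R>R_V$ is not cosmetic: it guarantees that the exit sphere is outside $\supp(V)$, which is what legitimises the local freeness of the orbit at the exit time.
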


Theorem \ref{t-thithi} implies Assumptions \ref{Ass_wave}(i)-(ii) with
$\D_\pm=M\setminus\Crit(H_0,\Phi)$ and $\Ran(W_\pm)=M\setminus B_\Phi^\pm$, but it
does not imply Assumption \ref{Ass_wave}(iii) since we do not have the equality
$\Ran(W_+)=\Ran(W_-)$. Theorem \ref{t-thithi} also implies that the sets $B_\Phi^\pm$
are closed in $M$.

\begin{proof}
We give the proof for $W_+$, since the proof for $W_-$ is similar.

(i) Let $K\subset M\setminus\Crit(H_0,\Phi)$ be compact. Then, there exists $T>0$
such that $\big|\Phi\big(\varphi^0_t(m)\big)\big|>R_V+1$ for all $t>T$ and $m\in K$.
In particular, $\varphi^0_t(K)\cap\supp(V)=\varnothing$ for all $t>T$. Thus, one has
$\varphi^0_t=\varphi^0_{t-T}\circ\varphi^0_{T}=\varphi_{t-T}\circ\varphi^0_T$ on $K$
for all $t>T$, which implies
$$
\lim_{t\to\infty}\varphi_{-t}\circ\varphi^0_t
=\lim_{t\to\infty}\varphi_{-t}\circ\varphi_{t-T}\circ\varphi^0_T
=\varphi_{-T}\circ\varphi^0_T
$$
on $K$. Thence, $W_+$ is a hamiltomorphism on any compact subset of
$M\setminus\Crit(H_0,\Phi)$, and thus a symplectomorphism on
$M\setminus\Crit(H_0,\Phi)$.

Let $m\in W_+\big(M\setminus\Crit(H_0,\Phi)\big)$. Then, there exist
$m_0\in M\setminus\Crit(H_0,\Phi)$ and $T>0$ such that 
$m=W_+(m_0)=\big(\varphi_{-t}\circ\varphi^0_t\big)(m_0)$ for all $t>T$, meaning that
$\varphi_t(m)=\varphi^0_t(m_0)$ for all $t>T$. However, since
$m_0\in M\setminus\Crit(H_0,\Phi)$, one has for each $t\in\R$
$$
(\nabla H_0)\big(\varphi^0_t(m_0)\big)=(\nabla H_0)(m_0)\neq0
\quad\hbox{and}\quad
\big|\Phi\big(\varphi^0_t(m_0)\big)\big|
\ge\big||t|\cdot|(\nabla H_0)(m_0)|-|\Phi(m_0)|\big|
$$
due to \eqref{Eq_linear}-\eqref{Eq_linear_bis}. This implies that
$m\in M\setminus B_\Phi^+$.

Assume now that $m \in M\setminus B_\Phi^+$. If
$\big|\Phi\big(\varphi_t(m)\big)\big|\ge R_V+1$ for all $t\ge0$, one directly obtains
$m=W_+(m)\in W_+\big(M\setminus\Crit(H_0,\Phi)\big)$. If not, consider the time 
$t_0:=\inf\big\{t\ge0\mid\big|\Phi\big(\varphi_t(m)\big)\big|>R_V+1\big\}$ and
observe that for each $\varepsilon\in(0,t_0)$
\begin{align*}
(R_V+1)^2
>\big|\Phi\big(\varphi_{t_0-\varepsilon}(m)\big)\big|^2
&=\left|\Phi\big(\varphi_{t_0}(m)\big)
-\int_0^\varepsilon\d s\,\{\Phi,H\}\big(\varphi_{t_0-s}(m)\big)\right|^2\\
&=\big|\Phi\big(\varphi_{t_0}(m)\big)
-\varepsilon\;\!(\nabla H_0)\big(\varphi_{t_0}(m)\big)\big|^2\\
&=(R_V+1)^2-2\varepsilon\;\!(\Phi\cdot\nabla H_0)\big(\varphi_{t_0}(m)\big)
+\varepsilon^2\big|(\nabla H_0)\big(\varphi_{t_0}(m)\big)\big|^2.
\end{align*}
So, one has
$
2(\Phi\cdot\nabla H_0)\big(\varphi_{t_0}(m)\big)
>\varepsilon\big|(\nabla H_0)\big(\varphi_{t_0}(m)\big)\big|^2,
$
which implies $(\nabla H_0)\big(\varphi_{t_0}(m)\big)\neq0$ and
$(\Phi\cdot\nabla H_0)\big(\varphi_{t_0}(m)\big)>0$. It follows that
$$
\frac\d{\d t}\big|\Phi\big(\varphi_{t_0+t}(m)\big)\big|^2\Big|_{t=0}
=\frac\d{\d t}\left\{(R_V+1)^2+2t(\Phi\cdot\nabla H_0)\big(\varphi_{t_0}(m)\big)
+t^2\big|(\nabla H_0)\big(\varphi_{t_0}(m)\big)\big|^2\Big\}\right|_{t=0}
>0.
$$
So, $\varphi_{t_0+t}(m)$ remains out of the support of $V$ for all $t\ge0$, which
implies $\varphi_{t+t_0}(m) = \big(\varphi^0_t \circ \varphi_{t_0}\big) (m)$ for all
$t\ge0$. To conclude, let $m_0\in M\setminus\Crit(H_0,\Phi)$ satisfy
$\varphi^0_{t_0}(m_0)=\varphi_{t_0}(m)$ (such a $m_0$ exists, since
$\varphi_{t_0}(m)\in M\setminus\Crit(H_0,\Phi)$ and $\varphi^0_{t_0}$ is a
diffeomorphism of $M\setminus\Crit(H_0,\Phi)$ onto itself). Then the last formula
gives
$
W_+(m_0)
=\lim_{t\to\infty}\big(\varphi_{-t-t_0}\circ\varphi^0_t\circ\varphi_{t_0}\big)(m)
=m
$,
which implies $m\in W_+\big(M\setminus\Crit(H_0,\Phi)\big)$.

(ii) Take $m\in M\setminus\Crit(H_0,\Phi)$. Then we know from (i) that there exists
$T>0$ such that $\varphi_T^0(m)\notin\supp(V)$ and
$W_+(m)=\big(\varphi_{-T}\circ\varphi^0_T\big)(m)$. This implies that
$\big(H\circ W_+\big)(m)=H_0(m)$. 
\end{proof}

In order to define the scattering map $S\equiv W_+^{-1}\circ W_-$, the ranges of the
wave maps $W_\pm$ have to be equal (see Assumption \ref{Ass_wave}(iii)). We present
in the sequel two methods to prove this equality. The first one hinges on a virial
identity, while the second one consists in showing that the symmetric difference of
the ranges of $W_\pm$ is of (Liouville) measure zero. In the second case,
completeness holds upon removing from $M$ a relatively abstract set; namely, the
preimage of a set of measure zero. In classical mechanics, this type of completeness is sometimes referred as
asymptotic completeness \cite[Sec.~3.4]{Thi97}.

In the standard case, the virial method relies on the following observation: if there
exists $\delta>0$ such that
$$
\frac{\d^2}{\d t^2}\big(|\Phi|^2\circ\varphi_t\big)(m)
\equiv\big(\big\{\big\{|\Phi|^2,H\big\},H\big\}\circ\varphi_t\big)(m)>\delta
\quad\hbox{for all $t\in\R$ and $m\in M$,}
$$
then $\lim_{|t|\to\infty}\big(|\Phi|^2\circ\varphi_t\big)(m)=+\infty$ for all
$m\in M$, and so $B_\Phi^\pm=\varnothing$. Accordingly, it is sufficient to prove
that  $\big\{\big\{|\Phi|^2,H\big\},H\big\}>\delta$ on $M$ to get, under Assumptions
\ref{AssCom} and \ref{Ass_pot}, the completeness of the wave maps. Now, a direct
calculation shows that the expression for $\big\{\big\{|\Phi|^2,H\big\},H\big\}$ (the
virial identity) is
\begin{equation}\label{daunting}
\frac12\;\!\big\{\big\{|\Phi|^2,H\big\},H\big\}
=|\nabla H_0|^2+|\{\Phi,V\}|^2+\Phi\cdot\big\{\{\Phi,V\},V\big\}
+\big\{\Phi\cdot\{\Phi,V\},H_0\big\}+\{\Phi\cdot\nabla H_0,V\}.
\end{equation}
At this level of generality, finding scattering systems $(M,H_0,H)$ for which this
expression is bounded away from zero is rather daunting. However, if one assumes that
$\{\Phi,V\}=0$ (as in the standard case where $V$ depends only on the position
$\Phi(q,p)=q$), then Formula \eqref{daunting} reduces to the much more sympathetic
equation\;\!:
$$
\big\{\Phi\cdot\nabla H_0,H\big\}=|\nabla H_0|^2+\Phi\cdot\big\{\nabla H_0,V\big\}.
$$
Lemma \ref{t-condvir} below provides conditions under which one recovers this
simplified situation. For it, we need the following\;\!:

\begin{Assumption}\label{Ass_H_not}
\begin{enumerate}
\item[(i)] $H_0$ is boundedly preserved by the flow of $H$, \ie for each $m\in
M$ there exists a constant $\textsc c_m\ge0$ such that 
$
\big|H_0\big(\varphi_t(m)\big)-H_0(m)\big|\le\textsc c_m
$
for all $t\in\R$.
\item[(ii)] There exists an increasing function $\rho:[0,\infty)\to[0,\infty)$
with
$\lim_{R\to\infty}\rho(R)=+\infty$ such that $|(\nabla H_0)(m)|\ge R$ implies
$|H_0(m)|\ge\rho(R)$ for all $m\in M$ and $R\ge0$.
\end{enumerate}
\end{Assumption}

If $V$ is bounded, then (i) holds automatically, since
$
\big|H_0\big(\varphi_t(m)\big)-H_0(m)\big|=\big|V\big(\varphi_t(m)
\big)-V(m)\big|.
$
On the other hand, (i) also holds for some unbounded $V$'s. For example, if $H_0\ge0$
and $V$ is bounded from below by $V_0\in\R$, then (i) is verified with
$\textsc c_m=H_0(m)+H(m)-V_0$.

\begin{Lemma}[Completeness of wave maps]\label{t-condvir}
Let $H_0$ and $H$ satisfy Assumptions \ref{AssCom} and \ref{Ass_pot}. Suppose
either
that $\{\Phi,V\}=0$ or that Assumption \ref{Ass_H_not} holds. Assume there
exists
$\delta>0$ such that $\{\Phi\cdot\nabla H_0,H\}(m)>\delta$ for all $m\in M$.
Then, $B_\Phi^\pm=\varnothing$ and the maps
$$
W_\pm:M\setminus\Crit(H_0,\Phi)\to M\quad\hbox{and}\quad
S=W_+^{-1}\circ W_-:M\setminus\Crit(H_0,\Phi)\to M\setminus\Crit(H_0,\Phi)
$$
are well defined symplectomorphisms. In particular, Assumptions
\ref{Ass_wave}(i)-(iii) hold on the submanifold $M\setminus\Crit(H_0,\Phi)$.
\end{Lemma}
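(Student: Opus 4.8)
The plan is to reduce the entire statement to the single claim $B_\Phi^\pm=\varnothing$. Indeed, once this is known, Theorem \ref{t-thithi} immediately gives that $W_\pm$ are symplectomorphisms from $M\setminus\Crit(H_0,\Phi)$ onto $M\setminus B_\Phi^\pm=M$; in particular the two wave maps share the common range $\RR=M$, so Assumption \ref{Ass_wave}(iii) holds (parts (i)-(ii) being already contained in Theorem \ref{t-thithi}), and $W_+^{-1}:M\to M\setminus\Crit(H_0,\Phi)$ is well defined. The scattering map $S=W_+^{-1}\circ W_-$ is then a composition $M\setminus\Crit(H_0,\Phi)\to M\to M\setminus\Crit(H_0,\Phi)$ of symplectomorphisms, hence itself a symplectomorphism of $M\setminus\Crit(H_0,\Phi)$. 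So the whole content of the lemma is the emptiness of $B_\Phi^\pm$, which I would establish by showing that $\big|\Phi\big(\varphi_t(m)\big)\big|\to\infty$ as $|t|\to\infty$, for every $m\in M$.

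I would handle the two alternative hypotheses by two different mechanisms. Under $\{\Phi,V\}=0$, the discussion preceding the lemma shows that the virial identity \eqref{daunting} collapses to $\frac12\big\{\big\{|\Phi|^2,H\big\},H\big\}=\{\Phi\cdot\nabla H_0,H\}$. Setting $g(t):=\big(|\Phi|^2\circ\varphi_t\big)(m)$, the evolution equation along the flow of $H$ gives $g''(t)=\big(\big\{\big\{|\Phi|^2,H\big\},H\big\}\circ\varphi_t\big)(m)>2\delta$ for all $t\in\R$. Integrating twice yields $g(t)>g(0)+g'(0)\;\!t+\delta\;\!t^2$, and since the quadratic term dominates, $g(t)\to+\infty$ as $|t|\to\infty$. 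Thus $|\Phi|$ is unbounded along both the forward and backward orbit of every $m$, which is exactly $B_\Phi^\pm=\varnothing$.

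Under Assumption \ref{Ass_H_not} I would instead use the first-order growth of $\Phi\cdot\nabla H_0$. The hypothesis $\{\Phi\cdot\nabla H_0,H\}>\delta$ gives $\frac\d{\d t}\big((\Phi\cdot\nabla H_0)\circ\varphi_t\big)(m)>\delta$, so that $(\Phi\cdot\nabla H_0)\big(\varphi_t(m)\big)\to+\infty$ as $t\to+\infty$ and $\to-\infty$ as $t\to-\infty$; in particular $\big|(\Phi\cdot\nabla H_0)\big(\varphi_t(m)\big)\big|\to\infty$ as $|t|\to\infty$. By Cauchy-Schwarz, $\big|\Phi\big(\varphi_t(m)\big)\big|\;\!\big|(\nabla H_0)\big(\varphi_t(m)\big)\big|\ge\big|(\Phi\cdot\nabla H_0)\big(\varphi_t(m)\big)\big|$, so it suffices to bound $|\nabla H_0|$ uniformly along the orbit. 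This is where both parts of Assumption \ref{Ass_H_not} enter: part (i) bounds $\big|H_0\big(\varphi_t(m)\big)\big|\le|H_0(m)|+\textsc c_m=:C_m$ uniformly in $t$, and part (ii), read contrapositively ($|H_0|<\rho(R)\Rightarrow|\nabla H_0|<R$) together with $\rho(R)\to\infty$, lets me choose $R_m$ with $\rho(R_m)>C_m$ and conclude $\big|(\nabla H_0)\big(\varphi_t(m)\big)\big|<R_m$ for all $t$. Combining, $\big|\Phi\big(\varphi_t(m)\big)\big|\to\infty$, so again $B_\Phi^\pm=\varnothing$.

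The routine parts are the two integrations and the Cauchy-Schwarz estimate. The only genuinely delicate point is the uniform control of $|\nabla H_0|$ along the $H$-orbit in the second case, since $\nabla H_0$ is \emph{not} preserved by $\varphi_t$. The mechanism of Assumption \ref{Ass_H_not} — trapping the energy $H_0$ in a bounded interval and then converting that into a velocity bound through the coercivity function $\rho$ — is precisely what replaces the convexity argument available when $\{\Phi,V\}=0$, and I expect this to be the step demanding the most care.
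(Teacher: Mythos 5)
Your proposal is correct and follows essentially the same route as the paper: the case $\{\Phi,V\}=0$ is handled exactly as in the paper's proof (the virial identity \eqref{daunting} collapses so that $\tfrac12\{\{|\Phi|^2,H\},H\}=\{\Phi\cdot\nabla H_0,H\}>\delta$, and double integration forces $|\Phi|^2\circ\varphi_t\to+\infty$), while under Assumption \ref{Ass_H_not} you use the same two ingredients the paper does, namely trapping $H_0$ along the orbit via (i) and converting velocity bounds into energy bounds via (ii). The only cosmetic difference is that you argue directly — a uniform bound on $|(\nabla H_0)\circ\varphi_t|$ plus Cauchy--Schwarz against $(\Phi\cdot\nabla H_0)(\varphi_t(m))\to\pm\infty$ — where the paper runs a dichotomy-and-contradiction argument, and your phrasing even delivers the full limit $\big|\Phi\big(\varphi_t(m)\big)\big|\to\infty$ that the paper asserts.
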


\begin{proof}
If $\{\Phi,V\}=0$, then the claim follows directly from the observations made
before
Assumption \ref{Ass_H_not}.

So, suppose that Assumption \ref{Ass_H_not} holds. Since
$\{\Phi\cdot\nabla H_0,H\}>\delta$, we have
$\frac\d{\d t}(\Phi\cdot\nabla H_0)\big(\varphi_t(m)\big)>\delta$ for all $t\in\R$
and $m\in M$. In particular, there exist for all $m\in M$ and all $R\ge0$ times
$t_\pm\in\R_\pm$ such that either $\big|\Phi\big(\varphi_{t_\pm}(m)\big)\big|>R$ or
$\big|(\nabla H_0)\big(\varphi_{t_\pm}(m)\big)\big|>R$. However, we know by
Assumption \ref{Ass_H_not}(ii) that if
$\big|(\nabla H_0)\big(\varphi_{t_\pm}(m)\big)\big|>R$ then
$\big|H_0\big(\varphi_{t_\pm}(m)\big)\big| >\rho(R)$. We also know from Assumption
\ref{Ass_H_not}(i) that there exists $\textsc c_m\ge0$ such that  
$\big|H_0\big(\varphi_{t_\pm}(m)\big)-H_0(m)\big|\le\textsc c_m$. This implies that
$$
\rho(R)
<\big|H_0\big(\varphi_{t_\pm}(m)\big)\big|
=\big|H_0\big(\varphi_{t_\pm}(m)\big)-H_0(m)+H_0(m)\big|
\le\textsc c_m+|H_0(m)|,
$$
which is a contradiction since $\rho(R)$ is not bounded for $R$ big enough. Thus, we
must have the following: for all $m\in M$ and all $R\ge0$ such that
$\rho(R)\ge C_m+H_0(m)$, there exist times $t_\pm\in\R_\pm$ such that 
$\big|\Phi\big(\varphi_{t_\pm}(m)\big)\big|>R$. In particular, we have that 
$\lim_{|t|\to\infty}\big(|\Phi|^2\circ\varphi_t\big)(m)=+\infty$ for all $m\in M$,
which implies the claim.
\end{proof}

Let $U\subset\R$ be an open set such that
$H_0^{-1}(U)\cap\Crit(H_0,\Phi)=\varnothing$. Then, $H_0^{-1}(U)$ is a submanifold of
$M$ preserved by the flow of $H_0$. But in general, $H_0^{-1}(U)$ is not preserved by
the flow of $H$. However, if Theorem \ref{t-thithi} applies, one has
$H\circ W_\pm=H_0$ and $W_\pm\big(H_0^{-1}(U)\big)=H^{-1}(U)\setminus B_\Phi^\pm$ is
also a submanifold of $M$. Therefore, the following corollary is a consequence of
Theorem \ref{t-thithi} and Lemma \ref{t-condvir}.

\begin{Corollary}\label{viriel}
Let $H_0$ and $H$ satisfy Assumptions \ref{AssCom} and \ref{Ass_pot}. Suppose
either
that $\{\Phi,V\}=0$ or that Assumption \ref{Ass_H_not} holds. Let $U\subset\R$
be an open
set such that
\begin{enumerate}
\item[(i)] $H_0^{-1}(U)\cap\Crit(H_0,\Phi)=\varnothing$,
\item[(ii)] there exists $\delta>0$ such that
$\big\{\Phi\cdot\nabla H_0,H\big\}(m)>\delta$ for all $m\in H_0^{-1}(U)$.
\end{enumerate}
Then, the sets $H_0^{-1}(U)$ and $H^{-1}(U)$ are submanifolds of $M$, and the maps
$$
W_\pm:H_0^{-1}(U)\to H^{-1}(U)\quad\hbox{and}\quad
S=W_+^{-1}\circ W_-:H_0^{-1}(U)\to H_0^{-1}(U)
$$
are well defined symplectomorphisms. In particular, Assumptions
\ref{Ass_wave}(i)-(iii) hold on the submanifold $H_0^{-1}(U)$. 
\end{Corollary}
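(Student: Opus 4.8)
The plan is to deduce everything from the two preceding results together with the observation that $H^{-1}(U)$ and $H_0^{-1}(U)$ differ only inside the $\Phi$-bounded set $\supp(V)$. First I would note that, $U$ being open and $H_0,H$ continuous, the sets $H_0^{-1}(U)$ and $H^{-1}(U)$ are open submanifolds of $M$, and that hypothesis (i) yields $H_0^{-1}(U)\subset M\setminus\Crit(H_0,\Phi)$, so Theorem \ref{t-thithi} applies on $H_0^{-1}(U)$. Since $H_0$ is constant along $\{\varphi^0_t\}$, the set $H_0^{-1}(U)$ is invariant under the free flow, and since $H\circ W_\pm=H_0$ one gets $W_\pm\big(H_0^{-1}(U)\big)=H^{-1}(U)\setminus B_\Phi^\pm$. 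Being restrictions to an open subset of the symplectomorphisms $W_\pm:M\setminus\Crit(H_0,\Phi)\to M\setminus B_\Phi^\pm$, the maps $W_\pm|_{H_0^{-1}(U)}$ are automatically symplectomorphisms onto their images. The whole statement then reduces to the single identity
$$
B_\Phi^\pm\cap H^{-1}(U)=\varnothing,
$$
for then $W_\pm$ carry $H_0^{-1}(U)$ onto $H^{-1}(U)$, the inverse $W_+^{-1}$ is defined on $\Ran\big(W_-|_{H_0^{-1}(U)}\big)=H^{-1}(U)$, and $S=W_+^{-1}\circ W_-$ becomes a symplectomorphism of $H_0^{-1}(U)$; Assumption \ref{Ass_wave}(i)-(iii) then hold with $\D_\pm=H_0^{-1}(U)$ and common range $H^{-1}(U)$.

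Next I would localise the virial computation of Lemma \ref{t-condvir}. Fix $m\in H^{-1}(U)$; by conservation of $H$ the orbit $\{\varphi_t(m)\}_{t\in\R}$ remains in $H^{-1}(U)$. Whenever $\varphi_t(m)\notin\supp(V)$ one has $V=0$ nearby, hence $H_0\big(\varphi_t(m)\big)=H\big(\varphi_t(m)\big)\in U$, so $\varphi_t(m)\in H_0^{-1}(U)$ and hypothesis (ii) gives $\{\Phi\cdot\nabla H_0,H\}\big(\varphi_t(m)\big)>\delta$; moreover all $V$-dependent terms of \eqref{daunting} vanish there, so this bound coincides with $|(\nabla H_0)(\varphi_t(m))|^2>\delta$ and with $\tfrac12\frac{\d^2}{\d t^2}|\Phi(\varphi_t(m))|^2>\delta$. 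On $\supp(V)$, by contrast, Assumption \ref{Ass_pot} forces $|\Phi(\varphi_t(m))|\le R_V$. Thus the virial lower bound is available all along the orbit except on the $\Phi$-bounded set $\supp(V)$.

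The escape mechanism is then the one already isolated inside the proof of Theorem \ref{t-thithi}: as soon as $|\Phi(\varphi_{t_0}(m))|>R_V$ and $(\Phi\cdot\nabla H_0)(\varphi_{t_0}(m))>0$ at some time $t_0$, the orbit stays outside $\supp(V)$ and evolves freely thereafter, whence $|\Phi(\varphi_t(m))|\to\infty$ as $t\to+\infty$ (symmetrically for $t\to-\infty$ with the sign reversed). So it will suffice to produce such a $t_0$, and this is exactly where the two alternative hypotheses come in, as in Lemma \ref{t-condvir}. If $\{\Phi,V\}=0$ the identity $\tfrac12\frac{\d^2}{\d t^2}|\Phi\circ\varphi_t|^2=\{\Phi\cdot\nabla H_0,H\}\circ\varphi_t$ holds globally, so $|\Phi\circ\varphi_t|^2$ is strictly convex wherever it exceeds $R_V^2$; if instead Assumption \ref{Ass_H_not} holds, parts (i)-(ii) bound $|\nabla H_0|$ uniformly along the orbit, so that the off-$\supp(V)$ growth $\{\Phi\cdot\nabla H_0,H\}>\delta$ forces $\Phi\cdot\nabla H_0$ itself to increase. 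Either way the virial drives $\Phi\cdot\nabla H_0$ to a positive value while $|\Phi|>R_V$, giving the required $t_0$ and hence $m\notin B_\Phi^\pm$.

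The hard part will be this last step. Because (ii) is guaranteed only off $\supp(V)$, I must rule out orbits that repeatedly dip back into the $\Phi$-bounded set $\supp(V)$ and so evade the virial growth; controlling these excursions---showing that strict convexity of $|\Phi|^2$ (in the case $\{\Phi,V\}=0$), respectively the uniform bound on $|\nabla H_0|$ (under Assumption \ref{Ass_H_not}), prevents the orbit from staying $\Phi$-bounded---is the one genuinely technical point, and the only place where the full strength of the alternative hypotheses is needed.
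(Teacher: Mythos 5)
Your structural reductions are exactly the ones the paper intends: the paper gives no separate proof of Corollary \ref{viriel}, only the paragraph preceding it, which notes that hypothesis (i) places $H_0^{-1}(U)$ inside $M\setminus\Crit(H_0,\Phi)$, that $H\circ W_\pm=H_0$ yields $W_\pm\big(H_0^{-1}(U)\big)=H^{-1}(U)\setminus B_\Phi^\pm$, and then declares the corollary ``a consequence of Theorem \ref{t-thithi} and Lemma \ref{t-condvir}''. Your reduction of the whole statement to the single identity $B_\Phi^\pm\cap H^{-1}(U)=\varnothing$, and your observation that a point of an $H^{-1}(U)$-orbit lying off $\supp(V)$ belongs to $H_0^{-1}(U)$ (so that hypothesis (ii) applies there), are both correct and faithful to that route.

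The problem is that you stop precisely at the decisive step, and the step you flag as ``the hard part'' is not a technicality that further effort will overcome: it cannot be carried out from the stated hypotheses. Hypothesis (ii) controls $\{\Phi\cdot\nabla H_0,H\}$ only on $H_0^{-1}(U)$, whereas the perturbed orbit is confined to $H^{-1}(U)$; inside $\supp(V)$ one has $H_0=H-V$, which may leave $U$, and there the growth of $\Phi\cdot\nabla H_0$ (equivalently the convexity of $|\Phi\circ\varphi_t|^2$ when $\{\Phi,V\}=0$) is completely uncontrolled --- Assumption \ref{Ass_H_not} is of no help either, since it only prevents $|\nabla H_0|$ from blowing up and says nothing about orbits idling inside $\supp(V)$. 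Concretely, take $M=T^*\R^n$, $H_0(q,p)=|p|^2/2$, $\Phi(q,p)=q$, and $V(q)=E\;\!\chi(|q|)$ a compactly supported bump with $\chi(0)=1$, $\chi'<0$ on $(0,1)$, $E>0$: then $q\cdot(\nabla V)(q)\le0$, so $\{\Phi\cdot\nabla H_0,H\}(q,p)=|p|^2-q\cdot(\nabla V)(q)\ge2H_0(q,p)$, and (i)--(ii) hold with $U$ a small neighbourhood of $E$ and a suitable $\delta>0$; yet $(0,0)$ is an equilibrium of $\varphi_t$ with $H(0,0)=E\in U$, and the incoming radial trajectory of energy exactly $E$ stalls asymptotically at the top of the bump. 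Hence $B_\Phi^+\cap H^{-1}(U)\neq\varnothing$, the maps $W_\pm$ are not onto $H^{-1}(U)$, and $S(m_-)$ is undefined for some $m_-\in H_0^{-1}(U)$ --- so your asserted intermediate claim (``either way the virial drives $\Phi\cdot\nabla H_0$ to a positive value while $|\Phi|>R_V$'') fails. Your diagnosis thus isolates a real defect that the paper's one-sentence derivation glosses over: the corollary is exactly Lemma \ref{t-condvir} with the global bound on $M$ weakened to a bound on $H_0^{-1}(U)$, and that weakening is too strong. To close the argument one must require (ii) on $H_0^{-1}(U)\cup H^{-1}(U)$ (or along all perturbed orbits with energy in $U$); with that strengthening, both of your alternatives ($\{\Phi,V\}=0$ giving strict convexity of $|\Phi\circ\varphi_t|^2$, respectively Assumption \ref{Ass_H_not} bounding $|\nabla H_0|$ along the orbit) finish the proof exactly as in Lemma \ref{t-condvir}. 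Note also that trapped sets of the above kind are precisely what Proposition \ref{t-compasympt} excludes only up to Liouville measure zero, which is consistent with their appearance here.
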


Before pursuing the Examples \ref{Ex_hp_1}-\ref{Ex_Poin_1} of Section
\ref{Sec_Free_Ham}, we give a last result on scattering theory sometimes referred to
as asymptotic completeness. It is is inspired by \cite[Thm.~3.4.7(c)]{Thi97} (in the
case $\Phi(q,p)=q$ and $H(q,p)=|p|^2/2+V(q)$ on $M=T^*\R^n$) and basically states
that the ranges of $W_\pm$ are equal up to a set of Liouville measure zero. We recall
that the Liouville measure of a Borel subset $U\subset M$, with $M$ of finite
dimension, is given by 
$$
\m_\omega(U):=\int_U\frac{\omega^n}{n\;\!!}\;\!.
$$
We also recall that the symmetric difference of sets $X,Y$ is
$X\triangle Y:=(X\setminus Y)\cup(Y\setminus X)$.

\begin{Proposition}[Asymptotic completeness of wave maps]\label{t-compasympt}
Assume that $M$ has finite dimension. Let $H_0$ and $H$ satisfy Assumptions
\ref{AssCom} and \ref{Ass_pot}. Suppose that for each $k\in\N$ the set
$$
A_k:=\big\{m\in M\mid|\Phi(m)|\le k\hbox{ and }\,|H(m)|\le k\big\}
$$
satisfies $\m_\omega(A_k)<\infty$. Then,
$\m_\omega\big(\Ran(W_+)\triangle\Ran(W_-)\big)=0$.
\end{Proposition}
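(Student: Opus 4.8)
The plan is to first reduce the statement to a purely measure-theoretic fact about the $\Phi$-bounded sets, and then to prove that fact by a volume-preservation argument adapted to the infinite-measure setting. By Theorem~\ref{t-thithi} one has $\Ran(W_\pm)=M\setminus B_\Phi^\pm$, and since a symmetric difference is unchanged under taking complements in $M$, it follows that $\Ran(W_+)\triangle\Ran(W_-)=B_\Phi^+\triangle B_\Phi^-$. Thus it suffices to show $\m_\omega(B_\Phi^+\setminus B_\Phi^-)=0$; the reverse bound $\m_\omega(B_\Phi^-\setminus B_\Phi^+)=0$ then follows by exchanging the two time directions (\ie replacing $\varphi_t$ by $\varphi_{-t}$).

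The key device is to exploit that $H$ is constant along its flow $\varphi_t$ (just as $H_0$ is along $\varphi^0_t$) in order to confine forward-bounded orbits to finite-measure sets. For $k\in\N$ set
$$
C_k^+:=\big\{m\in M\mid|H(m)|\le k\ \hbox{and}\ |\Phi(\varphi_t(m))|\le k\ \hbox{for all}\ t\ge0\big\}.
$$
Each $C_k^+$ is Borel (a countable intersection over rational $t\ge0$ of closed sets, by continuity of $t\mapsto\Phi(\varphi_t(m))$), is contained in $A_k$ and hence of finite measure, and one has $B_\Phi^+=\bigcup_k C_k^+$ since energy conservation turns any $\Phi$-bound into membership in some $C_k^+$. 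Crucially, $C_k^+$ is \emph{forward invariant}: $\varphi_s(C_k^+)\subset C_k^+$ for all $s\ge0$. Fixing $k$ and writing $C:=C_k^+$, I would prove $\m_\omega(C\setminus B_\Phi^-)=0$; taking the countable union over $k$ then yields $\m_\omega(B_\Phi^+\setminus B_\Phi^-)=0$.

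The core step uses that $\varphi_t$ preserves the Liouville measure $\m_\omega$ (Hamiltonian flows are symplectic). Forward invariance makes $t\mapsto\varphi_{-t}(C)$ an increasing family (for $t\ge0$) of sets of common finite measure $\m_\omega(C)$, so $D:=\bigcup_{n\in\N}\varphi_{-n}(C)\supset C$ satisfies $\m_\omega(D\setminus C)=0$ by continuity of measure from below. Now if $m\in C$ but $m\notin B_\Phi^-$, then $|\Phi(\varphi_{-t}(m))|>k$ for arbitrarily large $t$; rounding such a $t$ up to an integer $s$, one checks that $\varphi_{-s}(m)\in D$ while $\varphi_{-s}(m)\notin C$, so $m\in\varphi_s(D\setminus C)$. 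Therefore $C\setminus B_\Phi^-\subset\bigcup_{s\in\N}\varphi_s(D\setminus C)$, a countable union of images of a null set under measure-preserving maps, hence null.

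The main obstacle is that $M$ carries infinite Liouville measure, so the Poincar\'e recurrence theorem cannot be invoked directly. The device that circumvents this is the decomposition $B_\Phi^+=\bigcup_k C_k^+$ into forward-invariant pieces of finite measure --- made possible precisely by the conservation of $H$ and the finiteness hypothesis $\m_\omega(A_k)<\infty$ --- together with the monotonicity of $\varphi_{-t}(C)$, which replaces recurrence by an elementary ``no volume can leak out along backward orbits'' estimate. The only routine points left to verify are the measurability and forward invariance of $C_k^+$ and the passage from real return times to integer ones.
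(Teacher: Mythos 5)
Your proof is correct, and structurally it follows the same route as the paper's: reduce via Theorem~\ref{t-thithi} to showing $\m_\omega\big(B_\Phi^+\triangle B_\Phi^-\big)=0$, then use conservation of $H$ along its own flow together with the hypothesis $\m_\omega(A_k)<\infty$ to write $B_\Phi^\pm$ as a countable union of finite-measure, semi-invariant pieces (your $C_k^+$ play the role of the paper's $A_k^\pm=\bigcap_{\pm t\ge0}\varphi_t(A_k)$). The one genuine difference is the final step: the paper disposes of each piece by citing Schwarzschild's capture theorem \cite[Thm.~7.1.15]{AMR88}, which asserts precisely that, for a measure-preserving flow, the set of points of a finite-measure set that are trapped in one time direction but escape in the other is null; you instead prove this fact from scratch, via the increasing family $\varphi_{-n}(C)$ of constant finite measure, the limit set $D$ with $\m_\omega(D\setminus C)=0$, and the covering $C\setminus B_\Phi^-\subset\bigcup_{s\in\N}\varphi_s(D\setminus C)$. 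Your inline argument is in substance the textbook proof of that capture theorem, so the trade-off is clear: your version is self-contained and elementary, using only measure preservation of Hamiltonian flows, continuity of measure from below, and continuity of the flow (the latter needed both for the measurability of $C_k^+$ and to upgrade ``some escape time exceeding $k$'' to ``arbitrarily large escape times exceeding $k$'', a point you correctly flag and which does follow from boundedness of $t\mapsto|\Phi(\varphi_{-t}(m))|$ on compact time intervals); the paper's version is shorter because it delegates exactly this recurrence-type argument to the reference.
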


In the case $\Phi(q,p)=q$ and $H(q,p)=|p|^2/2+V(q)$ on $M=T^*\R^n$, the condition
$\m_\omega(A_k)<\infty$, is satisfied, for instance, if $V$ is bounded.

\begin{proof}
For each $k\in\N$, let $A_k^\pm:=\bigcap_{\pm t\ge0}\varphi_t(A_k)$. Then, simple
calculations using the identity $H\circ\varphi_t=H$ show that
$B_\Phi^\pm=\bigcup_{k\in\N}A_k^\pm$. Furthermore, one deduce from Schwarzschild's
capture theorem \cite[Thm.~7.1.15]{AMR88} that
$
\m_\omega(A_k^+\cap A_k^-)=\m_\omega(A_k^+)=\m_\omega(A_k^-)
$
for each $k\in\N$. Therefore, $m_\omega\big(A^\pm_k\setminus A^\mp_k\big)=0$ for each
$k\in\N$, and one gets from Theorem \ref{t-thithi} and the sub-additivity of
$\m_\omega$ that
$$
m_\omega\big(\Ran(W_+)\setminus\Ran(W_-)\big)
=m_\omega\big(B_\Phi^-\setminus B_\Phi^+\big)
\le\sum_{k\in\N}m_\omega\big(A^-_k\setminus B_\Phi^+\big)
\le\sum_{k\in\N}m_\omega\big(A^-_k\setminus A^+_k\big)
=0.
$$
Repeating the argument with $W_+$ and $W_-$ interchanged yields the conclusion.
\end{proof}

\begin{Example}[$H_0(q,p)=h(p)$, continued]\label{Ex_hp_2}
Let $V\in\cinf_{\rm c}(\R^n;\R)$ be a $\cinf$ function with compact support and let
$H\in\cinf(M;\R)$ be the perturbed Hamiltonian given by $H(q,p):=h(p)+V(q)$. To show
the completeness of the corresponding vector field $X_H$, one has to impose some
condition on $h$. So, we assume that
$|(\nabla h)(p)|\le\alpha\e^{\;\!\beta\langle p\rangle}$ for some constants
$\alpha,\beta\ge0$ and all $p\in\R^n$, but we also note that many other cases can be
covered (such as when $h$ is a proper map). Under this assumption, the completeness
of $X_H$ follows from \cite[Prop.~2.1.20]{AM78} with the (proper and $\cinf$)
function $M\ni(q,p)\mapsto\e^{\;\!\beta\langle p\rangle}+\langle q\rangle\in\R$.
Since Assumption \ref{Ass_pot} is satisfied, Theorem \ref{t-thithi} implies that the
wave maps $W_\pm$ exist and are symplectomorphisms from
$\R^n\times\R^n\setminus(\nabla h)^{-1}(\{0\})$ to $M\setminus B_\Phi^\pm$.
Furthermore, the commutation $\{\Phi,V\}=0$ implies that
$$
\big\{\Phi\cdot\nabla H_0,H\big\}(q,p)
=|(\nabla h)(p)|^2-q^\T\cdot(\Hess h)(p)\;\!(\nabla V)(q)
\ge|(\nabla h)(p)|^2
-n\;\!{\textsc c_V}\max_{1\le i,j\le n}\big|(\partial_i\partial_jh)(p)\big|,
$$
where $(\Hess h)(p)$ is the Hessian matrix of $h$ at $p$ and
${\textsc c_V}:=\sup_{q\in\R^n}|q|\cdot|(\nabla V)(q)|$. Therefore, if there exist
continuous functions $g_1,g_2:\Ran(h)\to[0,\infty)$ such that
$$
|(\nabla h)(p)|^2\ge g_1\big(h(p)\big)
\quad\hbox{and}\quad
\max_{1\le i,j\le n}\big|(\partial_i\partial_jh)(p)\big|
\le g_2\big(h(p)\big)
\quad\hbox{for all}~p\in\R^n
$$
(which occurs for instance when $h(p)=|p|^2/2$ or $h(p)=\sqrt{1+p^2}$), then the open
sets
$$
U_\delta:=\big\{x\in\R\mid g_1(x)-n\;\!{\textsc c_V}g_2(x)>\delta\big\},
\quad\delta>0,
$$
satisfy the conditions (i) and (ii) of Corollary \ref{viriel}. Thus, the maps
$$
W_\pm:H_0^{-1}(U_\delta)\to H^{-1}(U_\delta)\quad\hbox{and}\quad
S=W_+^{-1}\circ W_-:H_0^{-1}(U_\delta)\to H_0^{-1}(U_\delta)
$$
are well defined symplectomorphisms, and Assumption \ref{Ass_wave} holds on the
submanifold $H_0^{-1}(U_\delta)$.
\end{Example}

\begin{Example}[Particle in a tube, continued]\label{Ex_tube_2}

Let $V\in\cinf_{\rm c}(\Omega;\R)$ be a $\cinf$ function with compact support and let
$H\in\cinf(M;\R)$ be the perturbed Hamiltonian given by $H(q,p):=H_0(q,p)+V(q)$.
Then, Assumption \ref{Ass_pot} is satisfied and we know from
\cite[Prop.~2.1.20]{AM78} (once more applied with the function \eqref{Func_tube})
that $X_H$ is complete. So, Theorem \ref{t-thithi} implies that the wave maps $W_\pm$
exist and are symplectomorphisms from
$\Omega\times\big(\R\setminus\{0\}\times\R^n\big)$ to $M\setminus B_\Phi^\pm$.
Now, we cannot apply Corollary \ref{viriel} to obtain the identity of the ranges
of $W_\pm$, since the virial identity
$$
\big\{\Phi\cdot\nabla H_0,H\big\}(q,p)=p_1^2-q^1\big(\partial_1V\big)(q)
$$
does not involve observables comparable with the free energy $H_0(q,p)$. Instead, we
set $K_V:=\sup_{q\in\Omega}\big|q^1\big(\partial_1V\big)(q)\big|$ and define the
open set
$$
M_\delta:=\big\{(q,p)\in M\mid|(\nabla H_0)(q,p)|^2>K_V+\delta\big\},\quad\delta>0,
$$
Then, $\varphi_t^0$ is a diffeomorphism on $M_\delta$ for each $t\in\R$ and
$\big|\big\{\Phi\cdot\nabla H_0,H\big\}\big|>\delta$ on $M_\delta$. So, for each
$(q,p)\in M_\delta$ there exists $T>0$ such that
$W_-(q,p)=\big(\varphi_T\circ\varphi_{-T}^0\big)(q,p)$ and
$$
\lim_{|t|\to\infty}\big(|\Phi|^2\circ\varphi_t\circ W_-\big)(q,p)
=\lim_{|t'|\to\infty}\big(|\Phi|^2\circ\varphi_{t'}\circ\varphi_{-T}^0\big)(q,p)
=+\infty.
$$
This implies that $W_-(M_\delta)\subset M\setminus\big\{B^-_\Phi\cup B^+_\Phi\big\}$,
and thus
$$
S:=W_+^{-1}\circ W_-:M_\delta\to\big(W_+^{-1}\circ W_-\big)(M_\delta)
$$
is a well defined symplectomorphism.
\end{Example}

\begin{Example}[Poincar\'e ball, continued]\label{Ex_Poin_2}
Let $V\in\cinf(\mathring B_1;\R)$ with $\supp(V)\subset B_{R_V}$ for some
$R_V\in[0,1)$ and let $H\in\cinf(M;\R)$ be the perturbed Hamiltonian given by
$H(q,p):=H_0(q,p)+V(q)$. Then, Assumption \ref{Ass_pot} is satisfied (since
$|\Phi|\le\tanh^{-1}(2R_V)$ on $\supp(V)\times\R^n$) and we know from Gordon's
Theorem \cite{Gor70} that $X_H$ is complete. So, Theorem \ref{t-thithi} implies that
the wave maps $W_\pm$ exist and are symplectomorphisms from
$\mathring B_1\times\R^n\setminus\{0\}$ to $M\setminus B_\Phi^\pm$. Now, direct
calculations using the inclusion $\supp(V)\subset B_{R_V}$ and the bound
$|\Phi|\le\tanh^{-1}(2R_V)$ on $\supp(V)\times\R^n$ show that
$$
\big\{\Phi\cdot\nabla H_0,H\big\}
=2H_0+\sqrt{2H_0}\,\big\{\Phi,V\big\}+\Phi\;\!\big\{\sqrt{2H_0},V\big\}
$$
with $\big|\sqrt{2H_0}\,\big\{\Phi,V\big\}\big|$ and
$\big|\Phi\;\!\big\{\sqrt{2H_0},V\big\}\big|$ bounded on $M$. Therefore, there exists
a constant $K_V\ge0$ such that $\big\{\Phi\cdot\nabla H_0,H\big\}\ge2H_0-K_V$ on $M$.
Moreover, Assumption \ref{Ass_H_not} is satisfied due to the boundedness of $V$ and
the identity $\nabla H_0=\sqrt{2H_0}$. So, the open intervals
$U_\delta:=\big(\frac{K_V+\delta}2,\infty\big)$, $\delta>0$, satisfy the conditions
(i) and (ii) of Corollary \ref{viriel}, and the maps
$$
W_\pm:H_0^{-1}(U_\delta)\to H^{-1}(U_\delta)\quad\hbox{and}\quad
S=W_+^{-1}\circ W_-:H_0^{-1}(U_\delta)\to H_0^{-1}(U_\delta)
$$
are well defined symplectomorphisms. In particular, Assumption \ref{Ass_wave} holds
on the submanifold $H_0^{-1}(U_\delta)$.
\end{Example}

%--------------------------------------------------------------------------------------
\section{Time delay in classical scattering theory} \label{Sec_time}
\setcounter{equation}{0}
%--------------------------------------------------------------------------------------

In this section, we consider for general scattering systems $(M,H_0,H)$ the
symmetrised time delay defined in terms of sojourn times in the dilated regions
$\Phi^{-1}(B_r)$. Under appropriate assumptions, we prove its existence and relate
it to the time of arrival $T$ defined in \eqref{def_T}. When the scattering process
preserves the norm of the velocity observable $\nabla H_0$, we show that the original
(unsymmetrised) time delay also exists and coincides with the symmetrised time delay.
We refer to \cite{BD81,BO81}, \cite[Sec.~II.B]{BGG83}, \cite[Sec.~4.1]{dCN02},
\cite[Sec.~III]{GT07}, \cite[Ch.~10]{KK92}, \cite{Nar80,NT81} and
\cite[Sec.~3.4]{Thi97} for previous works on classical time delay for
$H_0(q,p)=|p|^2/2$ and $H(q,p)=|p|^2/2+V(q)$ on $M=T^*\R^n$.

So, let
$$
T_r^0(m_-):=\int_\R\d t\,\big(\chi_r^\Phi\circ\varphi_t^0\big)(m_-)
$$
be the sojourn time in the region $\Phi^{-1}(B_r)$ of the free trajectory starting
from $m_-\in\D_-$ at time $t=0$, and let
$$
T_r(m_-):=\int_\R\d t\,\big(\chi_r^\Phi\circ\varphi_t\circ W_-\big)(m_-)
$$
be the corresponding sojourn time of the perturbed trajectory starting from
$W_-(m_-)$ at time $t=0$. The free sojourn time $T_r^0(m_-)$ is finite for each
$m_-\in\D_-\setminus\Crit(H_0,\Phi)$ due to Equation \eqref{Eq_linear}. The
finiteness of the perturbed sojourn time $T_r(m_-)$ is shown in Lemma \ref{Lem_soj}
below under some additional assumptions. Under these assumptions, one can define the
symmetrised time delay in $\Phi^{-1}(B_r)$ for the scattering system $(M,H_0,H)$ with
starting point $m_-:$
$$
\tau_r(m_-):=T_r(m_-)-\frac12\;\!\big\{T_r^0(m_-)+\big(T_r^0\circ
S\big)(m_-)\big\}.
$$
The time $\tau_r(m_-)$ can be interpreted as the time spent the perturbed trajectory
$\big\{\big(\varphi_t\circ W_-\big)(m_-)\big\}_{t\in\R}$ within $\Phi^{-1}(B_r)$
minus the time spent by the corresponding free trajectory (before and after
scattering) within the same region.

In the next lemma, we use the auxiliary time
\begin{align}
\tau_r^{\rm free}(m_-)
:=\frac12\int_0^\infty\d t\,&\big\{\big(\chi_r^\Phi\circ\varphi_t^0\circ
S\big)(m_-)
-\big(\chi_r^\Phi\circ\varphi_{-t}^0\circ S\big)(m_-)\label{tau_free}\\
&\qquad\qquad-\big(\chi_r^\Phi\circ\varphi_t^0\big)(m_-)
+\big(\chi_r^\Phi\circ\varphi_{-t}^0\big)(m_-)\big\},\nonumber
\end{align}
which is finite for all $m_-\in\D_-$ due to Theorem \ref{Cor_car}. The definition of $\tau_r^{\rm free}$ is inspired by a similar definition in the context of quantum
scattering theory \cite[Sec.~4]{RT11}.

\begin{Lemma}\label{Lem_soj}
Let $H_0$ and $H$ satisfy Assumptions \ref{AssCom} and \ref{Ass_wave}, and let
$m_-\in\D_-\setminus\Crit(H_0,\Phi)$ satisfy $S(m_-)\notin\Crit(H_0,\Phi)$. Suppose
also that there exist functions $g_\pm\in\lone(\R_\pm,\d t)$ such that
\begin{equation}\label{hipo}
\big|\big(\chi_r^\Phi\circ W_-
-\chi_r^\Phi\big)\big(\varphi_t^0(m_-)\big)\big|
\le g_-(t)\quad\hbox{for all}~\,r>0\hbox{ and }\;\!t\in\R_-
\end{equation}
and
\begin{equation}\label{hipopo}
\big|\big(\chi_r^\Phi\circ W_+
-\chi_r^\Phi\big)\big((S\circ\varphi_t^0)(m_-)\big)\big|
\le g_+(t)\quad\hbox{for all}~\,r>0\hbox{ and }\;\!t\in\R_+.
\end{equation}
Then, $T_r(m_-)$ is finite for each $r>0$, and
$$
\lim_{r\to\infty}\big\{\tau_r(m_-)-\tau_r^{\rm free}(m_-)\big\}=0.
$$
\end{Lemma}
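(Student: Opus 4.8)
The plan is to reduce the difference $\tau_r(m_-)-\tau_r^{\rm free}(m_-)$ to a sum of two integrals, one over $\R_-$ dominated by $g_-$ and one over $\R_+$ dominated by $g_+$, and then to conclude by dominated convergence. The two ingredients that make the hypotheses \eqref{hipo} and \eqref{hipopo} applicable are the intertwining relation \eqref{inter} and the commutation \eqref{commutation_S}. Indeed, \eqref{inter} gives $\varphi_t\circ W_-=W_-\circ\varphi^0_t$ on $\D_-$, so that the perturbed sojourn time reads
$$
T_r(m_-)=\int_\R\d t\,\big(\chi_r^\Phi\circ W_-\circ\varphi^0_t\big)(m_-).
$$
Writing $W_-=W_+\circ S$ and using $S\circ\varphi^0_t=\varphi^0_t\circ S$ from \eqref{commutation_S}, I would rewrite the integrand, for positive times, as $\big(\chi_r^\Phi\circ W_+\big)\big((S\circ\varphi^0_t)(m_-)\big)$, which \eqref{hipopo} compares to $\big(\chi_r^\Phi\circ\varphi^0_t\circ S\big)(m_-)$ with error at most $g_+(t)$; for negative times, \eqref{hipo} directly compares $\big(\chi_r^\Phi\circ W_-\circ\varphi^0_t\big)(m_-)$ to $\big(\chi_r^\Phi\circ\varphi^0_t\big)(m_-)$ with error at most $g_-(t)$.

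The finiteness of $T_r(m_-)$ then follows by splitting the integral at $t=0$. On $\R_-$ the integrand differs from $\big(\chi_r^\Phi\circ\varphi^0_t\big)(m_-)$ by at most $g_-(t)$, and the integral of the latter is bounded by the finite free sojourn time $T_r^0(m_-)$; on $\R_+$ it differs from $\big(\chi_r^\Phi\circ\varphi^0_t\circ S\big)(m_-)$ by at most $g_+(t)$, whose integral is bounded by $\big(T_r^0\circ S\big)(m_-)$. Both free sojourn times are finite because $m_-\notin\Crit(H_0,\Phi)$ and $S(m_-)\notin\Crit(H_0,\Phi)$ (Equation \eqref{Eq_linear}), and $g_\pm\in\lone(\R_\pm)$ by hypothesis; hence $T_r(m_-)<\infty$.

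The heart of the proof is an exact cancellation. Splitting $T_r$, $T_r^0$ and $T_r^0\circ S$ at $t=0$, substituting $t\mapsto-t$ in the $\varphi^0_{-t}$ terms of \eqref{tau_free}, and collecting, I expect all the purely free contributions to cancel, leaving
$$
\tau_r(m_-)-\tau_r^{\rm free}(m_-)=\int_{-\infty}^0\d t\,\big(\chi_r^\Phi\circ W_--\chi_r^\Phi\big)\big(\varphi^0_t(m_-)\big)+\int_0^\infty\d t\,\big(\chi_r^\Phi\circ W_+-\chi_r^\Phi\big)\big((S\circ\varphi^0_t)(m_-)\big).
$$
The bookkeeping of this cancellation is the only delicate point, and it is exactly here that \eqref{hipo} and \eqref{hipopo} supply the right integrable dominations of the two integrands. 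To finish, I observe that for each fixed $t$ the points $\varphi^0_t(m_-)$ and $(S\circ\varphi^0_t)(m_-)$ lie in $M$, where $|\Phi|$ is finite, so $\chi_r^\Phi\to1$ pointwise as $r\to\infty$; each integrand thus tends to $0$ pointwise while remaining bounded by $g_-$, respectively $g_+$. An application of the dominated convergence theorem then yields $\lim_{r\to\infty}\big\{\tau_r(m_-)-\tau_r^{\rm free}(m_-)\big\}=0$.
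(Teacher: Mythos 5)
Your proposal is correct and follows essentially the same route as the paper's proof: you derive the same identity $\tau_r(m_-)-\tau_r^{\rm free}(m_-)=\int_{\R_-}\d t\,\big(\chi_r^\Phi\circ W_--\chi_r^\Phi\big)\big(\varphi_t^0(m_-)\big)+\int_{\R_+}\d t\,\big(\chi_r^\Phi\circ W_+-\chi_r^\Phi\big)\big((S\circ\varphi_t^0)(m_-)\big)$ via the intertwining relation \eqref{inter}, the factorisation $W_-=W_+\circ S$ and the commutation \eqref{commutation_S}, then deduce finiteness of $T_r(m_-)$ from the bounds $g_\pm$ together with the finiteness of the free sojourn times, and conclude by dominated convergence using $\lim_{r\to\infty}\chi_r^\Phi(m)=1$. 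Your bookkeeping of the exact cancellation of the purely free contributions checks out, so nothing is missing.
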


Before moving on to the proof of the lemma, we make a digression to show that the
conditions \eqref{hipo}-\eqref{hipopo} are automatically verified if Assumption
\ref{Ass_pot} holds\;\!:

\begin{Remark}\label{rk_hipo_po}
If Assumptions \ref{AssCom}, \ref{Ass_wave} and \ref{Ass_pot} hold, then we know from
the proof of Theorem \ref{t-thithi} that there exists $T>0$ such that 
$\big(W_+\circ S\big)(m_-)=\big(\varphi_{-T}\circ\varphi^0_T\circ S\big)(m_-)$ and
$\big(S\circ\varphi_t^0\big)(m_-)=\big(\varphi_{t-T}\circ\varphi_T^0\circ S\big)(m_-)$
for all $t>T$. This implies for all $t>T$ that
$$
\big(W_+\circ S\circ\varphi_t^0\big)(m_-)
=\big(\varphi_t\circ W_+\circ S\big)(m_-)
=\big(\varphi_{t-T}\circ\varphi_T^0\circ S\big)(m_-)
=\big(S\circ\varphi_t^0\big)(m_-),
$$
and thus \eqref{hipopo} is satisfied for some $g_+$ of compact support. So, both
\eqref{hipo} and \eqref{hipopo} hold, since a similar argument applies to
\eqref{hipo}.
\end{Remark}

\begin{proof}[Proof of Lemma \ref{Lem_soj}]
Direct computations using the identities
$$
\big(\varphi_t\circ W_-\big)(m_-)
=\big(W_-\circ\varphi_t^0\big)(m_-)
=\big(W_+\circ S\circ\varphi_t^0\big)(m_-)
$$
imply that
\begin{align}
I_r(m_-)
&:=T_r(m_-)-\frac12\;\!\big\{T_r^0(m)+\big(T_r^0\circ S\big)(m_-)\big\}
-\tau_r^{\rm free}(m_-)\label{others}\\
&=\int_{\R_+}\d t\,\big(\chi_r^\Phi\circ
W_+-\chi_r^\Phi\big)\big((S\circ\varphi_t^0)(m_-)\big)
+\int_{\R_-}\d t\,\big(\chi_r^\Phi\circ W_-
-\chi_r^\Phi\big)\big(\varphi_t^0(m_-)\big).\nonumber
\end{align}
It follows by \eqref{hipo} and \eqref{hipopo} that
$$
|I_r(m_-)|\le\int_{\R_+}\d t\,g_+(t)+\int_{\R_-}\d t\,g_-(t)<\infty,
$$
and thus $|I_r(m_-)|$ is bounded by a constant independent of $r$. So, $T_r(m_-)$ is
finite for each $r>0$, since all the other terms of \eqref{others} are finite for
each $r>0$. Moreover, one obtains that $\lim_{r\to\infty}I_r(m_-)=0$ by using
Lebesgue's dominated convergence theorem and the fact that
$\lim_{r\to\infty}\chi_r^\Phi(m)=1$ for each $m\in M$.
\end{proof}

The next theorem shows the existence of the symmetrised time delay $\tau_r(m_-)$ as
$r\to\infty$. It is a direct consequence of Definition \eqref{tau_free}, Lemma
\ref{Lem_soj} and Theorem \ref{Cor_car}.

\begin{Theorem}[Symmetrised time delay]\label{Teo_sym}
Let $H_0$ and $H$ satisfy Assumptions \ref{AssCom} and \ref{Ass_wave}, and let
$m_-\in\D_-\setminus\Crit(H_0,\Phi)$ satisfy $S(m_-)\notin\Crit(H_0,\Phi)$ and
\eqref{hipo}-\eqref{hipopo}. Then, one has 
\begin{equation}\label{Form_sym}
\lim_{r\to\infty}\tau_r(m_-)=T(m_-)-(T\circ S)(m_-).
\end{equation}
\end{Theorem}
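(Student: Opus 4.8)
The plan is to read off the statement by feeding the output of Theorem~\ref{Cor_car} into Lemma~\ref{Lem_soj}, so that the argument is essentially bookkeeping. First I would note that the hypotheses of Lemma~\ref{Lem_soj} are exactly those imposed here: $m_-\in\D_-\setminus\Crit(H_0,\Phi)$, $S(m_-)\notin\Crit(H_0,\Phi)$, together with the domination bounds \eqref{hipo}--\eqref{hipopo}. Lemma~\ref{Lem_soj} then gives $\lim_{r\to\infty}\big\{\tau_r(m_-)-\tau_r^{\rm free}(m_-)\big\}=0$, so it remains only to compute $\lim_{r\to\infty}\tau_r^{\rm free}(m_-)$.

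The crucial observation is that the integrand defining $\tau_r^{\rm free}$ in \eqref{tau_free} is assembled out of two copies of the integrand appearing in Theorem~\ref{Cor_car}, one carried by the free orbit through $m_-$ and the other by the free orbit through $S(m_-)$. Concretely, setting
\[
A_r:=\tfrac12\int_0^\infty\d t\,\big\{\big(\chi_r^\Phi\circ\varphi_{-t}^0\big)(m_-)-\big(\chi_r^\Phi\circ\varphi_t^0\big)(m_-)\big\}
\]
and letting $B_r$ denote the same expression with $m_-$ replaced by $S(m_-)$, a direct rearrangement of the four terms in \eqref{tau_free} gives $\tau_r^{\rm free}(m_-)=A_r-B_r$. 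I would then apply Theorem~\ref{Cor_car} twice: at the point $m_-$ it yields $\lim_{r\to\infty}A_r=T(m_-)$, and at the point $S(m_-)$ it yields $\lim_{r\to\infty}B_r=(T\circ S)(m_-)$.

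Combining the two steps gives $\lim_{r\to\infty}\tau_r(m_-)=\lim_{r\to\infty}\tau_r^{\rm free}(m_-)=T(m_-)-(T\circ S)(m_-)$, which is \eqref{Form_sym}. The only point demanding attention --- and the nearest thing to an obstacle --- is the case distinction in Theorem~\ref{Cor_car}, which returns the arrival time $T$ only at points outside $\Crit(H_0,\Phi)$ and returns $0$ otherwise. The two non-criticality hypotheses $m_-\notin\Crit(H_0,\Phi)$ and $S(m_-)\notin\Crit(H_0,\Phi)$ are precisely what place both applications in the nontrivial branch, so that neither limit collapses to zero; once the signs of the four terms in \eqref{tau_free} are matched against the $\varphi_{-t}^0$-minus-$\varphi_t^0$ convention of Theorem~\ref{Cor_car}, the identity $\tau_r^{\rm free}(m_-)=A_r-B_r$ and hence the conclusion follow immediately.
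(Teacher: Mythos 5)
Your proposal is correct and takes essentially the same route as the paper, which proves Theorem \ref{Teo_sym} precisely as a direct consequence of Definition \eqref{tau_free}, Lemma \ref{Lem_soj} and Theorem \ref{Cor_car}. Your sign bookkeeping $\tau_r^{\rm free}(m_-)=A_r-B_r$, together with the two applications of Theorem \ref{Cor_car} at $m_-$ and at $S(m_-)$ (where the non-criticality hypotheses place both limits in the branch returning $T$), is exactly the intended argument.
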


Taking into account the definition \eqref{def_T} of $T$, one can rewrite
\eqref{Form_sym} as
$$
\lim_{r\to\infty}\tau_r(m_-)
=\frac{\Phi(m_-)\cdot(\nabla H_0)(m_-)}{|(\nabla H_0)(m_-)|^2}
-\frac{(\Phi\circ S)(m_-)\cdot(\nabla H_0\circ S)(m_-)}
{|(\nabla H_0\circ S)(m_-)|^2}\;\!.
$$

\begin{Remark}\label{Rem_EW}
It is worth making a couple of observations on the result of Theorem
\ref{Teo_sym}\;\!:

\begin{enumerate}
\item[(i)] For fixed $r>0$, the l.h.s. of Formula \eqref{Form_sym} is equal to the
symmetrised time delay in $\Phi^{-1}(B_r)$ for the scattering system $(M,H_0,H)$
with starting point $m_-$. On the other hand, the r.h.s. of Formula \eqref{Form_sym} is
equal to the arrival time $-(T\circ S)(m_-)$ of the particle after scattering minus
the arrival time $-T(m_-)$ of the particle before scattering. Therefore, Formula
\eqref{Form_sym} shows in a very general set-up that this difference of arrival
times is equal to the limit of the symmetrised time delay in $\Phi^{-1}(B_r)$ as
$r\to\infty$.

\item[(ii)] Denote by $\tau(m_-):=\lim_{r\to\infty}\tau_r(m_-)$ the global
time delay obtained in Theorem \ref{Teo_sym}. Then, the linear evolution
$T\circ\varphi^0_t=T+t$ of $T$ under the free flow $\varphi^0_t$, together with
the commutation \eqref{commutation_S} of $S$ with $\varphi^0_t$, implies that
$$
\tau(m_-)
=\big\{(T-T\circ S)\circ\varphi^0_t\big\}(m_-)
=\big(\tau\circ\varphi^0_t\big)(m_-)
$$
for all $t\in\R$, meaning that $\tau$ is a first integral of the free motion. This
property corresponds in the quantum case to the fact that the time delay operator is
decomposable in the spectral representation of the free Hamiltonian (see
\cite[Rem.~4.4]{RT11}).

\item[(iii)] Formula \eqref{Form_sym} can be considered as classical version of
the Eisenbud-Wigner formula of quantum mechanics. Indeed, if one replaces $m_-$ by
an appropriate incoming state $\varphi$ in a Hilbert space $\H$, $(H_0,H)$ by a pair of
self-adjoint operators in $\H$, $T$ by a time operator (acting as the differential
operator $i\frac\d{\d H_0}$ in the spectral representation of $H_0$) and $S$ by the
unitary scattering operator for $(H_0,H)$, one recovers the general Eisenbud-Wigner
formula established in Theorem 4.3 of \cite{RT11}\;\!:
$$
\lim_{r\to\infty}\tau_r(\varphi)
=\langle\varphi,T\varphi\rangle_\H-\langle S\varphi,TS\varphi\rangle_\H
=-\langle\varphi,S^*[T,S]\varphi\rangle_\H
=-\left\langle\varphi,iS^*\frac{\d S}{\d H_0}\;\!\varphi\right\rangle_\H.
$$
\end{enumerate}
\end{Remark}

In the next corollary, we show that the unsymmetrised time delay
$$
\tau_r^{\rm in}(m_-):=T_r(m_-)-T_r^0(m_-)
$$
exists and is equal to the symmetrised time delay in the limit $r\to\infty$ if the
scattering process preserves the norm of the velocity vector $\nabla H_0$ (the
superscript ``\hspace{1pt}in'', borrowed from \cite[Sec.~4]{RT11}, refers to
``\hspace{1pt}incoming'' time delay). This result is the classical analogue of
Theorem 5.4 of \cite{RT11} in quantum scattering theory. In the proof, we use the notations
$\cos(x,y):=\frac{x\cdot y}{|x||y|}$ and $\sin(x,y)^2:=1-\cos(x,y)^2$ for vectors
$x,y\in\R^d$.

\begin{Corollary}[Unsymmetrised time delay]\label{Cor_unsym}
Let $H_0$ and $H$ satisfy Assumptions \ref{AssCom} and \ref{Ass_wave}, and let
$m_-\in\D_-\setminus\Crit(H_0,\Phi)$ satisfy \eqref{hipo}-\eqref{hipopo}.
Suppose
also that
\begin{equation}\label{Ass_velo}
|(\nabla H_0)(m_-)|^2=|(\nabla H_0\circ S)(m_-)|^2.
\end{equation}
Then, one has
\begin{equation}\label{Form_not_sym}
\lim_{r\to\infty}\tau_r^{\rm in}(m_-)
=\lim_{r\to\infty}\tau_r(m_-)
=T(m_-)-(T\circ S)(m_-).
\end{equation}
\end{Corollary}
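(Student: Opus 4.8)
The plan is to reduce the statement to the explicit evaluation of the free sojourn time $T_r^0$ and then compare the two delays directly, invoking Theorem \ref{Teo_sym} for the symmetrised one. First I would rewrite the unsymmetrised delay in terms of the symmetrised delay. Subtracting the definitions of $\tau_r$ and $\tau_r^{\rm in}$ gives
$$
\tau_r^{\rm in}(m_-)=\tau_r(m_-)-\tfrac12\big\{T_r^0(m_-)-(T_r^0\circ S)(m_-)\big\},
$$
so that, since $\lim_{r\to\infty}\tau_r(m_-)=T(m_-)-(T\circ S)(m_-)$ by Theorem \ref{Teo_sym}, it suffices to prove that the bracketed difference of free sojourn times tends to $0$ as $r\to\infty$.

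The crux is an explicit formula for $T_r^0$. For $m\in M\setminus\Crit(H_0,\Phi)$, set $a:=\Phi(m)$ and $v:=(\nabla H_0)(m)\neq0$. Equation \eqref{Eq_linear} yields $\Phi\big(\varphi_t^0(m)\big)=a+tv$, hence $\big|\Phi\big(\varphi_t^0(m)\big)\big|^2=|v|^2t^2+2(a\cdot v)\;\!t+|a|^2$. Since $|v|^2>0$, the condition $\big|\Phi\big(\varphi_t^0(m)\big)\big|\le r$ defines a single (possibly empty) interval in $t$; for $r$ large enough the discriminant is positive and its length, which equals $T_r^0(m)=\int_\R\d t\,\big(\chi_r^\Phi\circ\varphi_t^0\big)(m)$, is
$$
T_r^0(m)=\frac{2}{|(\nabla H_0)(m)|}\sqrt{r^2-|\Phi(m)|^2\,\sin\big(\Phi(m),(\nabla H_0)(m)\big)^2}\;\!,
$$
using the Cauchy–Schwarz identity $(a\cdot v)^2-|a|^2|v|^2=-|a|^2|v|^2\sin(a,v)^2$. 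This is precisely where the notations $\cos(x,y)$ and $\sin(x,y)^2$ introduced just before the statement come into play.

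With this formula in hand I would evaluate it at $m_-$ and at $S(m_-)$ and subtract. The hypothesis \eqref{Ass_velo}, namely $|(\nabla H_0)(m_-)|=|(\nabla H_0\circ S)(m_-)|$, makes the two prefactors $\tfrac{2}{|\nabla H_0|}$ coincide, so the leading asymptotics $\tfrac{2r}{|\nabla H_0|}$ cancel exactly. Writing the remaining difference of square roots as $\tfrac{B-A}{\sqrt{r^2-A}+\sqrt{r^2-B}}$, with $A,B$ the two fixed quantities $|\Phi|^2\sin(\cdot,\cdot)^2$ evaluated at $m_-$ and $S(m_-)$, the numerator is constant while the denominator grows like $2r$, so the difference is $O(1/r)$ and vanishes as $r\to\infty$. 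Combined with the identity above and Theorem \ref{Teo_sym}, this gives \eqref{Form_not_sym}.

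The only real (and minor) obstacle is establishing the sojourn-time formula cleanly: checking that for large $r$ the solution set of the quadratic inequality is a genuine nondegenerate interval, and fixing an $r$-threshold valid simultaneously at $m_-$ and $S(m_-)$. Once this is secured, the equal-speed assumption does all the work, as it is exactly what cancels the divergent $O(r)$ parts; without it the difference $T_r^0(m_-)-(T_r^0\circ S)(m_-)$ would grow like $2r\big(|(\nabla H_0)(m_-)|^{-1}-|(\nabla H_0\circ S)(m_-)|^{-1}\big)$ and the unsymmetrised delay would diverge, which explains why symmetrisation is needed in general and why \eqref{Ass_velo} is the precise condition for the two delays to coincide.
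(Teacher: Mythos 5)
Your proof is correct and follows essentially the same route as the paper: reduce via $\tau_r^{\rm in}=\tau_r+\frac12\{(T_r^0\circ S)-T_r^0\}$ and Theorem \ref{Teo_sym}, use the explicit quadratic formula $T_r^0(m)=\frac{2}{|(\nabla H_0)(m)|}\sqrt{r^2-|\Phi(m)|^2\sin(\Phi(m),(\nabla H_0)(m))^2}$ (which the paper imports from the proof of Lemma 2.4 of \cite{GT11} rather than rederiving), and let \eqref{Ass_velo} cancel the divergent prefactors. The only cosmetic difference is the final limit: you rationalise the difference of square roots to get $O(1/r)$, while the paper recognises it as the derivative of $\nu\mapsto\sqrt{1-\nu^2 A}$ at $\nu=0$; both are equivalent one-line computations.
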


Note that the assumption $S(m_-)\notin\Crit(H_0,\Phi)$ of Lemma \ref{Lem_soj} and
Theorem \ref{Teo_sym} is, here, automatically verified for each
$m_-\in\D_-\setminus\Crit(H_0,\Phi)$ due to the hypothesis \eqref{Ass_velo}.

\begin{proof}
The identity
$$
\tau_r^{\rm in}(m_-)
=\tau_r(m_-)+\frac12\;\!\big\{(T_r^0\circ S)(m_-)-T_r^0(m_-)\big\},
$$
together with Theorem \ref{Teo_sym}, implies that is enough to show that
$$
\lim_{\nu\searrow0}\big\{\big(T_{1/\nu}^0\circ
S\big)(m_-)-T_{1/\nu}^0(m_-)\big\}=0.
$$
Now, we know from the proof of \cite[Lemma~2.4]{GT11} that for any $x\in\R^d$
and $y\in\R^d\setminus\{0\}$ one has
$$
\int_{\R_+}\d t\,\chi_{1/\nu}(x\pm ty)
=\frac{\sqrt{1-\nu^2|x|^2\sin(x,y)^2}}{|y|\nu}
\mp\frac{x\cdot y}{|y|^2}
$$
if $\nu>0$ is small enough. So, a direct calculation using Formula \eqref{Eq_linear}
and the hypothesis \eqref{Ass_velo} gives
\begin{align*}
&\big(T_{1/\nu}^0\circ S\big)(m_-)-T_{1/\nu}^0(m_-)\\
&=\frac{2\sqrt{1-\nu^2|(\Phi\circ S)(m_-)|^2
\sin\big((\Phi\circ S)(m_-),(\nabla H_0\circ S)(m_-)\big)^2}}{|(\nabla H_0\circ
S)(m_-)|\;\!\nu}\\
&\qquad-\frac{2\sqrt{1-\nu^2|\Phi(m_-)|^2\sin\big(\Phi(m_-),(\nabla
H_0)(m_-)\big)^2}}
{|(\nabla H_0)(m_-)|\;\!\nu}\\
&=\frac2{|(\nabla H_0\circ S)(m_-)|\;\!\nu}
\Big\{\sqrt{1-\nu^2|(\Phi\circ S)(m_-)|^2
\sin\big((\Phi\circ S)(m_-),(\nabla H_0\circ S)(m_-)\big)^2}-1\Big\}\\
&\qquad-\frac2{|(\nabla H_0)(m_-)|\;\!\nu}\Big\{\sqrt{1-\nu^2|\Phi(m_-)|^2
\sin\big(\Phi(m_-),(\nabla H_0)(m_-)\big)^2}-1\Big\},
\end{align*}
which implies that
\begin{align*}
&\lim_{\nu\searrow0}\big\{\big(T_{1/\nu}^0\circ
S\big)(m_-)-T_{1/\nu}^0(m_-)\big\}\\
&=\frac2{|(\nabla H_0\circ S)(m_-)|}\;\!\frac\d{\d\nu}\sqrt{1-\nu^2|(\Phi\circ
S)(m_-)|^2
\sin\big((\Phi\circ S)(m_-),(\nabla H_0\circ S)(m_-)\big)^2}\,\bigg|_{\nu=0}\\
&\qquad-\frac2{|(\nabla H_0)(m_-)|}\;\!\frac\d{\d\nu}\sqrt{1-\nu^2|\Phi(m_-)|^2
\sin\big(\Phi(m_-),(\nabla H_0)(m_-)\big)^2}\,\bigg|_{\nu=0}\\
&=0-0.
\end{align*}
\end{proof}

Taking into account the definition \eqref{def_T} of $T$ and the hypothesis
\eqref{Ass_velo}, one can rewrite \eqref{Form_not_sym} as
$$
\lim_{r\to\infty}\tau_r^{\rm in}(m_-)
=\lim_{r\to\infty}\tau_r(m_-)
=\frac{\Phi(m_-)\cdot(\nabla H_0)(m_-)
-(\Phi\circ S)(m_-)\cdot(\nabla H_0\circ S)(m_-)}{|(\nabla H_0)(m_-)|^2}\;\!.
$$

\begin{Remark}\label{rem_sim_pas_sim}
In general, one cannot expect the existence of the unsymmetrised time delay as
$r\to\infty$, since the sojourn times in regions defined in terms of $\Phi$ have no
reason to be comparable before and after scattering (even though $S$ commutes with
$\varphi^0_t$\;\!!). This occurs only in particular situations, as when the
scattering process preserves the norm of the velocity vector $\nabla H_0$. This is in
fact exactly what tells us Corollary \ref{Cor_unsym}: if the scattering process
preserves $|\nabla H_0|$, then the unsymmetrised time delay also exists and is equal
to the symmetrised time delay in the limit $r\to\infty$. We refer to the Examples
\ref{Ex_hp_3} and \ref{Ex_tube_3} below for an illustration of this observation.
\end{Remark}

\begin{Example}[$H_0(q,p)=h(p)$, continued]\label{Ex_hp_3}
We know that Assumptions \ref{AssCom}, \ref{Ass_wave} and \ref{Ass_pot} hold on
the manifold $H_0^{-1}(U_\delta)$, with $S:H_0^{-1}(U_\delta)\to H_0^{-1}(U_\delta)$
(see Example \ref{Ex_hp_2}). It follows that each $(q_-,p^-)\in H_0^{-1}(U_\delta)$ satisfy $S(q_-,p^-)\notin\Crit(H_0,\Phi)$ and \eqref{hipo}-\eqref{hipopo} (see Remark \ref{rk_hipo_po}).
So, Theorem \ref{Teo_sym} applies, and the global time delay exists and satisfies
$$
\lim_{r\to\infty}\tau_r(q_-,p^-)
=T(q_-,p^-)-T(q_+,p^+)
=\frac{q_-\cdot(\nabla h)(p^-)}{|(\nabla h)(p^-)|^2}
-\frac{q_+\cdot(\nabla h)(p^+)}{|(\nabla h)(p^+)|^2}\;\!,
$$
where $(q_+,p^+):=S(q_-,p^-)$. Now, if there exists a diffeomorphism $h_0:(0,\infty)\to\Ran(h_0)$
such that $h(p)=h_0(p^2)$ for all $p\in\R^n$ (such as when $h(p)=p^2/2$), then
$|(\nabla h)(p)|^2=\big(f\circ h\big)(p)$ with $f\in\cinf\big(\Ran(h_0)\big)$ given by
$f(x):=4h_0^{-1}(x)\big|h_0'\big(h_0^{-1}(x)\big)\big|^2$. Therefore, one has for any
$(q_-,p^-)\in H_0^{-1}(U_\delta)$
$$
|(\nabla H_0)(q_-,p^-)|^2
=\big(f\circ H_0\big)(q_-,p^-)
=\big(f\circ H_0\circ S\big)(q_-,p^-)
=|(\nabla H_0\circ S)(q_-,p^-)|^2
$$
due to the identities $H\circ W_{\pm}=H_0$ and $H=H_0\circ W_{\pm}^{-1}$ of Theorem
\ref{t-thithi}. So, Corollary \ref{Cor_unsym} applies, and the unsymmetrised time delay
exists and satisfies
$$
\lim_{r\to\infty}\tau_r^{\rm in}(q_-,p^-)
=\lim_{r\to\infty}\tau_r(q_-,p^-)
=\frac{q_-\cdot(\nabla h)(p^-)-q_+\cdot(\nabla h)(p^+)}{|(\nabla h)(p^-)|^2}\;\!.
$$
\end{Example}

\begin{Example}[Particle in a tube, the end]\label{Ex_tube_3}
We know that Assumptions \ref{AssCom} and \ref{Ass_pot} hold on $M_\delta$ and that
$S:M_\delta\to\big(W_+^{-1}\circ W_-\big)(M_\delta)$ has the required properties
(see Example \ref{Ex_tube_2}). It follows that each $(q_-,p^-)\in M_\delta$ satisfy $S(q_-,p^-)\notin\Crit(H_0,\Phi)$ and \eqref{hipo}-\eqref{hipopo} (see Remark
\ref{rk_hipo_po}). So, Theorem \ref{Teo_sym} applies and the global time delay in the
tube exists and satisfies
$$
\lim_{r\to\infty}\tau_r(q_-,p^-)
=T(q_-,p^-)-T(q_+,p^+)
=\frac{q_-^1p^-_1}{(p^-_1)^2}-\frac{q_+^1p^+_1}{(p^+_1)^2}\;\!,
$$
where $(q_+,p^+):=S(q_-,p^-)$. Note that although the scattering process preserves
the free energy $H_0$, it does not preserve the norm of the longitudinal momentum
alone since rearrangements between the transverse and longitudinal momenta occur
during the scattering. So, we do not have $(p^-_1)^2\neq(p^+_1)^2$ in general, and
thus (in agreement with Corollary \ref{Cor_unsym}) the unsymmetrised time delay has
no reason to exist.
\end{Example}

\begin{Example}[Poincar\'e ball, continued]\label{Ex_Poin_3}
We know that Assumptions \ref{AssCom}, \ref{Ass_wave} and \ref{Ass_pot} hold on
the manifold $H_0^{-1}(U_\delta)$, with $S:H_0^{-1}(U_\delta)\to H_0^{-1}(U_\delta)$
(see Example \ref{Ex_Poin_2}). Furthermore, one has for each $(q_-,p^-)\in H_0^{-1}(U_\delta)$
$$
|(\nabla H_0)(q_-,p^-)|^2
=2H_0(q_-,p^-)
=2(H_0\circ S)(q_-,p^-)
=|(\nabla H_0\circ S)(q_-,p^-)|^2.
$$
So, Corollary \ref{Cor_unsym} applies, and both time delays exist and satisfy
\begin{align*}
\lim_{r\to\infty}\tau_r^{\rm in}(q_-,p^-)
=\lim_{r\to\infty}\tau_r(q_-,p^-)
&=T(q_-,p^-)-T(q_+,p^+)\\
&=\frac{\Phi(q_-,p^-)-\Phi(q_+,p^+)}{\sqrt{2H_0(q_-,p^-)}}\\
&=\frac{2\left\{\tanh^{-1}\left(\frac{2(p^-\cdot\;\!q_-)}{|p^-|(1+|q_-|^2)}\right)
-\tanh^{-1}\left(\frac{2(p^+\cdot\;\!q_+)}{|p^+|(1+|q_+|^2)}\right)\right\}}
{|p^-|(1-|q_-|^2)}\;\!,
\end{align*}
with $(q_+,p^+):=S(q_-,p^-)$.
\end{Example}

%--------------------------------------------------------------------------------------
\section{Calabi invariant of the Poincar\'e scattering map}\label{Sec_Cal}
\setcounter{equation}{0}
%--------------------------------------------------------------------------------------

In this section, we relate the Calabi morphism (evaluated at the Poincar\'e scattering
map) to the time delay by combining the results of Section \ref{Sec_time} with the
ones of \cite{BP09}. As in the previous sections, we always assume that the
Hamiltonians $H_0$ and $H$ have complete flows $\{\varphi^0_t\}_{t\in\R}$ and
$\{\varphi_t\}_{t\in\R}$.

So, let $E\in\R$ be such that $H_0^{-1}(\{E\})\cap\Crit(H_0,\Phi)=\varnothing$. Then,  
$\Sigma^0_E:=H_0^{-1}(\{E\})$ is a regular submanifold of $M$ of dimension $2n-1$,
and the map 
$$
\Psi_E:\Sigma^0_E\to\R,\quad m\mapsto(\Phi\cdot\nabla H_0)(m)
$$
is $\cinf$. Furthermore, for each $\alpha\in\R$, the set 
$\Gamma_{E,\alpha}:=\Psi_E^{-1}(\{\alpha\})\subset\Sigma^0_E$ satisfies the
following\;\!:

\begin{Lemma}[Transversal section]\label{poinsec}
Let $H_0$ satisfy Assumption \ref{AssCom} and take $E\in\R$ such that 
$H_0^{-1}(\{E\})\cap\Crit(H_0,\Phi)=\varnothing$. Then, for each $\alpha\in\R$, the
set $\Gamma_{E,\alpha}$ is a regular submanifold of $\Sigma^0_E$ of dimension
$2(n-1)$ such that
\begin{enumerate}
\item[(a)] $X_{H_0}(m)\notin T_m\Gamma_{E,\alpha}$ for all $m\in\Gamma_{E,\alpha}$,
\item[(b)] for all $m\in\Sigma^0_E$, there exists a unique
$m_0=m_0(m)\in\Gamma_{E,\alpha}$ and a unique $t_0=t_0(m)\in\R$ such that
$m=\varphi^0_{t_0}(m_0)$.
\end{enumerate}
\end{Lemma}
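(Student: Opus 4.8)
The plan is to extract all three conclusions from a single flow identity for $\Psi_E$, namely that $\Psi_E$ grows linearly along the free flow with rate equal to the conserved quantity $|\nabla H_0|^2$. First I would record that, by Assumption~\ref{AssCom}, each component $\partial_jH_0=\{\Phi_j,H_0\}$ is a constant of the free motion, since $\frac{\d}{\d t}\big(\partial_jH_0\circ\varphi^0_t\big)=\big\{\{\Phi_j,H_0\},H_0\big\}\circ\varphi^0_t=0$; hence $|\nabla H_0|^2\circ\varphi^0_t=|\nabla H_0|^2$ for all $t$. Combining this with the identity $\Psi_E=\Phi\cdot\nabla H_0=T\,|\nabla H_0|^2$ and the linear growth $T\circ\varphi^0_t=T+t$ of the arrival time (established in Section~\ref{Sec_Free_Ham} and valid on $M\setminus\Crit(H_0,\Phi)\supset\Sigma^0_E$), I obtain the key formula
\begin{equation}\label{key_psi}
\big(\Psi_E\circ\varphi^0_t\big)(m)=\Psi_E(m)+t\,\big|(\nabla H_0)(m)\big|^2,\qquad m\in\Sigma^0_E,\ t\in\R .
\end{equation}
Since $\Sigma^0_E\cap\Crit(H_0,\Phi)=\varnothing$ and $\Crit(H_0,\Phi)=(\nabla H_0)^{-1}(\{0\})$, the coefficient $|\nabla H_0|^2$ is strictly positive everywhere on $\Sigma^0_E$; this positivity is the single fact that drives the three conclusions.

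Next I would establish the submanifold claim and part~(a) simultaneously. Differentiating \eqref{key_psi} at $t=0$ gives $X_{H_0}(\Psi_E)=(\d\Psi_E)(X_{H_0})=|\nabla H_0|^2>0$ on $\Sigma^0_E$, where $X_{H_0}|_{\Sigma^0_E}$ is tangent to $\Sigma^0_E$ because $H_0$ is preserved by its own flow. In particular $(\d\Psi_E)_m\neq0$ for every $m\in\Sigma^0_E$, so every $\alpha\in\R$ is a regular value of $\Psi_E\colon\Sigma^0_E\to\R$, and the preimage theorem makes $\Gamma_{E,\alpha}=\Psi_E^{-1}(\{\alpha\})$ a regular submanifold of $\Sigma^0_E$ of codimension one, i.e. of dimension $2(n-1)$. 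Part~(a) is then immediate: since $T_m\Gamma_{E,\alpha}=\ker(\d\Psi_E)_m$ and $(\d\Psi_E)_m\big(X_{H_0}(m)\big)=|\nabla H_0(m)|^2\neq0$, the vector $X_{H_0}(m)$ cannot lie in $T_m\Gamma_{E,\alpha}$.

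Finally, for part~(b) I would solve \eqref{key_psi} for the crossing time. Given $m\in\Sigma^0_E$, I seek $m_0\in\Gamma_{E,\alpha}$ and $t_0\in\R$ with $m=\varphi^0_{t_0}(m_0)$, equivalently $m_0=\varphi^0_{-t_0}(m)$ together with $\Psi_E(m_0)=\alpha$. Applying \eqref{key_psi} with $t=-t_0$ forces
$$
\alpha=\Psi_E\big(\varphi^0_{-t_0}(m)\big)=\Psi_E(m)-t_0\,\big|(\nabla H_0)(m)\big|^2,
$$
which, by positivity of $|\nabla H_0(m)|^2$, has the unique solution $t_0=\big(\Psi_E(m)-\alpha\big)\big/\big|(\nabla H_0)(m)\big|^2$. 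Setting $m_0:=\varphi^0_{-t_0}(m)$ yields a point of $\Sigma^0_E$ (the flow preserves $H_0$) with $\Psi_E(m_0)=\alpha$, hence $m_0\in\Gamma_{E,\alpha}$; this gives existence. Uniqueness follows the same way: if $\varphi^0_{t_0}(m_0)=\varphi^0_{t_1}(m_1)$ with $m_0,m_1\in\Gamma_{E,\alpha}$, then $m_1=\varphi^0_{t_0-t_1}(m_0)$ and \eqref{key_psi} gives $\alpha=\alpha+(t_0-t_1)|\nabla H_0(m_0)|^2$, whence $t_0=t_1$ and $m_0=m_1$. There is no serious obstacle here; the only point demanding care is the bookkeeping that turns Assumption~\ref{AssCom} into the conservation of $|\nabla H_0|^2$, and hence into the linear-with-constant-rate form \eqref{key_psi}, after which the regular value, transversality and unique-foliation claims all reduce to the strict positivity of $|\nabla H_0|^2$ on $\Sigma^0_E$.
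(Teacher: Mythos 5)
Your proposal is correct and follows essentially the same route as the paper: both rest on the identity $\Psi_E\circ\varphi^0_t=\Psi_E+t\,|\nabla H_0|^2$ (which the paper gets directly from Assumption \ref{AssCom} and Equation \eqref{Eq_linear}, while you pass through $T$ and the conservation of $|\nabla H_0|^2$ --- the same computation), then deduce surjectivity of $(\d\Psi_E)_m$ from $|\nabla H_0|^2>0$ on $\Sigma^0_E$, apply the regular level set theorem, identify $T_m\Gamma_{E,\alpha}=\ker(\d\Psi_E)_m$ for part (a), and solve the linear flow equation for the unique crossing time in part (b). No gaps; the two proofs differ only in bookkeeping.
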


Note that the first two assertions imply that $\Gamma_{E,\alpha}$ is (in
$\Sigma^0_E$) a local transversal section of the vector field $X_{H_0}|_{\Sigma^0_E}$
(see \cite[Def.~7.1.1]{AM78}).

\begin{proof}
Take $m\in\Gamma_{E,\alpha}$ and let $\gamma:\R\to\Sigma^0_E$ be the integral curve
of $X_{H_0}$ at $m$ given by $\gamma(t):=\varphi^0_t(m)$. Then, we have
$
(\Psi_E\circ\gamma)(t)=\Psi_E(m)+t\;\!|(\nabla H_0)(m)|^2
$
due to Assumption \ref{AssCom} and Equation \eqref{Eq_linear}. So, the differential
$(\d\Psi_E)_m:T_m\Sigma^0_E\to T_{\Psi_E(m)}\R$ satisfies for each germ
$f\in \cinf_{\Psi_E(m)}(\R)$ at $\Psi_E(m)$ the equalities
\begin{align*}
\big[(\d\Psi_E)_m\big(X_{H_0}(m)\big)\big](f)
=\frac\d{\d t}\;\!(f\circ\Psi_E)(\gamma(t))|_{t=0}
&=\frac\d{\d t}\;\!f\big(\Psi_E(m)+t\;\!|(\nabla H_0)(m)|^2\big)|_{t=0}\\
&=|(\nabla H_0)(m)|^2f'\big(\Psi_E(m)\big)\\
&=|(\nabla H_0)(m)|^2\frac\partial{\partial t}\Big|_{\Psi_E(m)}(f),
\end{align*}
and thus
$
(\d\Psi_E)_m\big(X_{H_0}(m)\big)
=|(\nabla H_0)(m)|^2\frac\partial{\partial t}\big|_{\Psi_E(m)}
$.
Since $|(\nabla H_0)(m)|\neq0$, this implies that $(\d\Psi_E)_m$ is surjective, and
so $\Gamma_{E,\alpha}\equiv\Psi_E^{-1}(\{\alpha\})$ is a regular submanifold of
$\Sigma^0_E$ of codimension $1$ by the regular level set theorem. Moreover, we also
obtain that $X_{H_0}(m)\notin \ker\big((\d\Psi_E)_m\big)$, which implies that
$X_{H_0}(m)\notin T_m\Gamma_{E,\alpha}$ since
$\ker\big((\d\Psi_E)_m\big)=T_m\Gamma_{E,\alpha}$ (see the remark after
\cite[Prop.~1.6.18]{AM78}).

To prove (b), take $m\in\Sigma^0_E$ and observe that
\begin{align*}
\varphi^0_t(m)\in\Gamma_{E,\alpha}
\iff\Psi_E\big(\varphi^0_t(m)\big)=\alpha
\iff\Psi_E(m)+t\;\!|(\nabla H_0)(m)|^2=\alpha
\iff t=\frac{\alpha-\Psi_E(m)}{|(\nabla H_0)(m)|^2}\;\!.
\end{align*}
Thus, the time $t_0:=\frac{\Psi_E(m)-\alpha}{|(\nabla H_0)(m)|^2}\in\R$ and the point
$m_0:=\varphi^0_{-t_0}(m)\in\Gamma_{E,\alpha}$ are unique and satisfy
$m=\varphi^0_{t_0}(m_0)$.
\end{proof}

Lemma \ref{poinsec} implies in particular that the submanifold
$$
\Gamma_E:=\Gamma_{E,0}\equiv\big\{m\in\Sigma^0_E\mid(\Phi\cdot\nabla H_0)(m)=0\big\}
$$
is a Poincar\'e section in the sense of \cite[Assumption~2.2]{BP09}. The rest of the
assumptions of \cite{BP09} are verified in the following lemma\;\!:

\begin{Lemma}\label{Hyp_BP}
Assume that $M$ is exact (that is, with $\omega$ exact) and satisfies
$\dim(M)\ge4$. Suppose also that
\begin{enumerate}
\item[(H1)] Assumption \ref{AssCom} holds,
\item[(H2)] $V$ has compact support,
\item[(H3)] $\{\Phi,V\}=0$ or Assumption \ref{Ass_H_not} holds,
\item[(H4)] $U\subset\R$ is an open set such that
\begin{enumerate}
\item[(i)] $H_0^{-1}(U)\cap\Crit(H_0,\Phi)=\varnothing$,
\item[(ii)] there exists $\delta>0$ such that
$\big\{\Phi\cdot\nabla H_0,H\big\}(m)>\delta$ for all $m\in H_0^{-1}(U)$.
\end{enumerate}
\end{enumerate}
Then, all the assumptions of Theorems 3.1 and 3.2 of \cite{BP09} are verified for
each $E\in U$.
\end{Lemma}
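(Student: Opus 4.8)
The plan is to treat this as a verification lemma: I would go through the hypotheses of \cite[Thm.~3.1 \& 3.2]{BP09} one at a time and match each to a statement already established in Sections \ref{Sec_wave} and \ref{Sec_Cal}, using (H1)--(H4) together with the exactness of $\omega$ and the bound $\dim(M)\ge4$ as the only inputs. The discussion preceding the lemma has already shown, via Lemma \ref{poinsec}, that $\Gamma_E\equiv\Gamma_{E,0}$ is a Poincar\'e section in the sense of \cite[Ass.~2.2]{BP09}, so the remaining tasks are to supply the scattering data on the energy shell and to confirm the global reduction hypotheses underlying the Poincar\'e scattering map $\widetilde S_E$.

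For the scattering data, I would feed the hypotheses into Corollary \ref{viriel}. First, (H2) yields Assumption \ref{Ass_pot}: a set that is compact in $M$ has bounded image under $\Phi$, so $|\Phi|\le R_V$ on $\supp(V)$ for some $R_V\ge0$. Together with (H1) $=$ Assumption \ref{AssCom}, (H3), and (H4), the hypotheses of Corollary \ref{viriel} are met. This gives that $H_0^{-1}(U)$ and $H^{-1}(U)$ are submanifolds, that $W_\pm\colon H_0^{-1}(U)\to H^{-1}(U)$ and $S=W_+^{-1}\circ W_-\colon H_0^{-1}(U)\to H_0^{-1}(U)$ are well-defined symplectomorphisms, that Assumption \ref{Ass_wave} holds on $H_0^{-1}(U)$, and—through Theorem \ref{t-thithi}—that $H\circ W_\pm=H_0$ there. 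Fixing $E\in U$ and restricting to $\Sigma^0_E=H_0^{-1}(\{E\})$, I obtain exactly the input \cite{BP09} requires: a symplectomorphism $S_E:=S|_{\Sigma^0_E}$ of $\Sigma^0_E$ satisfying \eqref{S_E_rest} and commuting with $\varphi^{0,E}_t$ by \eqref{commutation_S}, so that $\widetilde S_E$ is defined through \eqref{S_E_diagram}.

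For the reduction hypotheses, I would argue as follows. The bound $\dim(M)\ge4$ forces $n\ge2$, hence $\dim\widetilde\Sigma^0_E=2(n-1)\ge2$, avoiding the degenerate zero-dimensional case noted after \eqref{S_E_diagram}. Since $E\in U$ and (H4)(i) give $\Sigma^0_E\cap\Crit(H_0,\Phi)=\varnothing$, Equation \eqref{Eq_linear} shows $\varphi^0_t(m)\ne m$ for all $m\in\Sigma^0_E$ and $t\ne0$, so the action $\varphi^{0,E}$ is free. Properness, and hence that $\widetilde\Sigma^0_E$ is a smooth manifold with $\pi^0_E$ a submersion (\cite[Prop.~4.1.23]{AM78}), follow from Lemma \ref{poinsec}: part (b) provides a global slice $\Gamma_{E,\alpha}$ meeting each orbit exactly once, with the explicit smooth assignment $t_0=(\Psi_E(m)-\alpha)/|(\nabla H_0)(m)|^2$ from its proof, so that $\Gamma_{E,\alpha}\times\R\to\Sigma^0_E$, $(m_0,t)\mapsto\varphi^0_t(m_0)$, is a diffeomorphism trivializing the action. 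Finally, the exactness of $\omega$ is what guarantees that the Calabi invariant $\mathrm{Cal}\big(\widetilde S_E\big)$ on $\big(\widetilde\Sigma^0_E,\widetilde\omega^0_E\big)$ is well defined, as demanded in \cite{BP09}.

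I expect the main obstacle to be bookkeeping rather than conceptual: the delicate point is aligning our manifold-wide objects with the fixed-energy, Poincar\'e-section formalism of \cite{BP09}, in particular checking that the equivariance \eqref{commutation_S} together with the transversality of $\Gamma_E$ in Lemma \ref{poinsec}(a) descends to a well-defined Poincar\'e scattering map on the slice, and that the short-range input (H2) meets whatever support or decay condition \cite{BP09} places on the perturbation. Once each item on their list is paired with the corresponding statement above, the verification closes.
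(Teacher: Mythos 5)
Your proposal is correct and follows essentially the same route as the paper: both proofs feed (H1)--(H4) into Corollary \ref{viriel} (noting that (H2) implies Assumption \ref{Ass_pot}) to get the energy-shell scattering data, invoke Lemma \ref{poinsec} for the Poincar\'e section, and then verify the hypotheses of \cite{BP09} item by item, with the non-trapping condition following (as in your properness-of-the-action argument) from $H_0^{-1}(U)\cap\Crit(H_0,\Phi)=\varnothing$ and Equation \eqref{Eq_linear}, and the compactness condition on $H^{-1}\big((-\infty,E]\big)\cap\supp(V)$ and $H_0^{-1}\big((-\infty,E]\big)\cap\supp(V)$ following trivially from (H2). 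The only difference is bookkeeping: the paper spells out the exact list of Assumptions 2.1(i)--(iv) of \cite{BP09}, where you necessarily had to guess at their precise form, but every guess matches a condition the paper actually checks.
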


Note that the exactness of $M$ necessarily implies the noncompactness of $M$
\cite[Rem.~V.9.4]{LM87}. Note also that Assumption \ref{Ass_pot} follows from the compactness
of the support of $V$.

\begin{proof}
The hypotheses (H1)-(H4) imply that Corollary \ref{viriel} applies. Thus, each
$E\in U$ is a regular value of $H_0$ and $H$ (Assumption 2.1(i) of \cite{BP09}) and
$W_+\big(H_0^{-1}(\{E\})\big)=W_-\big(H_0^{-1}(\{E\})\big)=H^{-1}(\{E\})$ (Equation
(2.4) of \cite{BP09}). The flows of $H_0$ and $H$ are complete (Assumption 2.1(ii) of
\cite{BP09}). The fact that $H_0^{-1}(U)\cap\Crit(H_0,\Phi)=\varnothing$ implies for
each $E\in U$ the non-trapping condition of Assumption 2.1(iii) of \cite{BP09}; that
is, for any compact set $K\subset\Sigma_E^0$ there exists $T>0$ such that for all
$m\in K$ and all $|t|\ge T$, one has $\varphi_t^0(m)\notin K$. Finally, the sets
$H^{-1}\big((-\infty,E]\big)\cap\supp(V)$ and
$H_0^{-1}\big((-\infty,E]\big)\cap\supp(V)$ are compact for any $E\in\R$ (Assumption
2.1(iv) of \cite{BP09}) due to Assumption (H2).
\end{proof}

\begin{Remark}
In the proof of Lemma \ref{Hyp_BP} we did not check the assumption of noncompactness
of $\widetilde\Sigma^0_E$ made in \cite[Thm.~3.1]{BP09} because we believe it is
unnecessary. Indeed, under the other assumptions of \cite[Thm.~3.1]{BP09}, the authors
of \cite{BP09} show in their Lemma 5.1 that $\widetilde\Sigma^0_E$ is exact.
Therefore, $\widetilde\Sigma^0_E$ is necessarily noncompact, since any exact
symplectic manifold is noncompact (see \cite[Rem.~V.9.4]{LM87}).
\end{Remark}

Under the assumptions of Lemma \ref{Hyp_BP}, we know from \cite{BP09} that the results
of the last paragraph of Section \ref{Sec_wave} hold for any $E\in U :$ The orbit space
$\widetilde\Sigma^0_E=\Sigma^0_E/\R$ is a symplectic manifold of dimension $2(n-1)$
with symplectic form $\widetilde\omega^0_E$ and the restricted scattering map
$S_E:=S|_{\Sigma^0_E}$ induces a symplectomorphism $\widetilde S_E$ of
$\big(\widetilde\Sigma^0_E,\widetilde\omega^0_E\big)$. Furthermore, the Poincar\'e
section $\Gamma_E$ can be considered as a ``concrete realisation of the abstract manifold
$\widetilde\Sigma^0_E$'', due to the existence of a diffeomorphism
$\gamma_E^0:\Gamma_E\to\widetilde\Sigma^0_E$ satisfying
$(\gamma_E^0)^*\widetilde\omega^0_E=\omega|_{\Gamma_E}$ ($\gamma_E^0=\pi_E^0\circ i$, with
$i:\Gamma_E\to\Sigma^0_E$ the natural embedding). Since each element
$m\in\Sigma_E^0$ can be identified with a pair $(m_0,t_0)\in\Gamma_E\times\R$ satisfying
$\varphi_{t_0}^0(m_0)$ due to Lemma \ref{poinsec}(b), we obtain an identification of
$\Sigma^0_E\simeq\Gamma_E\times\R$ which permits to represent the free flow as
$\varphi_t^0:(m_0,t_0)\mapsto(m_0,t_0+t)$ and the map $S_E$ as
$$
S_E:(m_0,t_0)\mapsto\big(\widetilde s_E(m_0),t_0-\tau_E(m_0)\big),
$$
where $\widetilde s_E:=(\gamma_E^0)^{-1}\circ\widetilde S_E\circ\gamma_E^0:\Gamma_E\to\Gamma_E$
is a symplectomorphism and $\tau_E\in\cinf(\Gamma_E;\R)$. Using the expressions for
$(m_0,t_0)$ obtained in the proof of Lemma \ref{poinsec}(b) we thus obtain that
$$
\varphi_{(T\circ S_E)(m)}^0\big(\widetilde s_E(m_0),t_0-\tau_E(m_0)-(T\circ S_E)(m)\big)
=S_E(m)
=\varphi_{-\tau_E(m_0)}^0\big(\widetilde s_E(m_0),t_0\big),
$$
meaning that $\tau_E(m_0)=-(T\circ S_E)(m_0)$. Since $T(m_0)=0$ for each $m_0\in\Gamma_E$, it
follows from Theorem \ref{Teo_sym} that
\begin{equation}\label{Eq_nous_eux}
\lim_{r\to\infty}\tau_r(m_0)=0-(T\circ S_E)(m_0)=\tau_E(m_0)
\quad\hbox{for all $m_0\in\Gamma_E$.}
\end{equation}
This means that, when evaluated at points $m_0\in\Gamma_E$, the global time
delay $\lim_{r\to\infty}\tau_r(m_0)$ defined in terms of sojourn times in
$\Phi^{-1}(B_r)\subset M$ coincides with the time delay $\tau_E(m_0)$ defined on
$\Sigma_E^0$ as the difference of time intervals from the Poincar\'e section
$\Gamma_E$ before and after scattering. In other terms, the choice of the position
observables $\Phi$ provides natural Poincar\'e sections $\Gamma_E$ suitable for the
application of the (fixed energy) theory of \cite{BP09}.

Now, we introduce as in \cite[Sec.~2.5 \& 2.6]{BP09} the average time delay on $\Gamma_E$
$$
\mathcal T_E
:=\int_{\Gamma_E}\tau_E(m_0)\;\!\frac{\omega^{n-1}(m_0)}{(n-1)\;\!!}
=-\int_{\Gamma_E}\frac{(\Phi\circ S)(m_0)\cdot(\nabla H_0\circ S)(m_0)}
{|(\nabla H_0\circ S)(m_0)|^2}\;\!\frac{\omega^{n-1}(m_0)}{(n-1)\;\!!}
$$
and the regularised phase space volume\footnote{Here, we follow the conventions of
\cite[Sec.~2.3]{BP09} for the integrals on $M$ and $\Sigma^0_E:$ The orientation on
$M$ is fixed in such a way that the form $\omega^{n}$ is positive on a positively
oriented basis, and the orientation on $\Sigma^0_E$ is fixed such that if
$(e_1,\ldots,e_{2n-1})$ is a positively oriented basis in $T_m\Sigma^0_E$ and
$v\in T_mM$ is such that $(\d H_0)_m(v)>0$, then $(v,e_1,\ldots,e_{2n-1})$ is a
positively oriented basis in $T_mM$.}
$$
\xi(E):=\int_M\Big(\chi_{H_0^{-1}((-\infty,E])}(m)-\chi_{H^{-1}((-\infty,E])}(m)\Big)
\frac{\omega^{n}(m)}{n\;\!!}\;\!.
$$
Then, the nice Theorems 3.1 and 3.2 of \cite{BP09} state that
\begin{equation}\label{Eq_main_BP}
\mathrm{Cal}\big(\widetilde S_E\big)=\xi(E)
\qquad\hbox{and}\qquad\mathcal T_E=-\frac\d{\d E}\;\!\xi(E),
\end{equation}
where $\mathrm{Cal}:\mathrm{Dom}(\mathrm{Cal},M)\to\R$ is the Calabi invariant as
defined in \cite[Sec.~B.2]{BP09} (the difference with respect to the usual definition
is that $\mathrm{Dom}(\mathrm{Cal},M)$ is here a subset of compactly supported
symplectomorphisms whereas it is usually the set of compactly supported
hamiltomorphisms, see \cite[Eq.~(10.4)]{MS98} or \cite{Ban78}).

By combining Equations \eqref{Eq_nous_eux} and \eqref{Eq_main_BP} one gets the
following\;\!:

\begin{Theorem}[Calabi invariant]\label{Thm_Cal}
Under the assumptions of Lemma \ref{Hyp_BP}, one has for each $E\in U$
$$
\frac\d{\d E}\;\!\mathrm{Cal}\big(\widetilde S_E\big)
=-\int_{\Gamma_E}\lim_{r\to\infty}\tau_r(m_0)\;\!\frac{\omega^{n-1}(m_0)}{(n-1)\;\!!}
=\int_{\Gamma_E}\frac{(\Phi\circ S)(m_0)\cdot(\nabla H_0\circ S)(m_0)}
{|(\nabla H_0\circ S)(m_0)|^2}\;\!\frac{\omega^{n-1}(m_0)}{(n-1)\;\!!}\;\!.
$$
\end{Theorem}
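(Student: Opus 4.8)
The plan is to obtain the statement as a direct synthesis of three ingredients already at our disposal: the time delay formula of Theorem \ref{Teo_sym}, the identification \eqref{Eq_nous_eux} of the global time delay restricted to $\Gamma_E$ with the fixed-energy time delay $\tau_E$, and the two identities \eqref{Eq_main_BP} borrowed from \cite{BP09}. Since Lemma \ref{Hyp_BP} guarantees that all hypotheses of Theorems 3.1 and 3.2 of \cite{BP09} hold for each $E\in U$, I would begin by invoking the first identity of \eqref{Eq_main_BP}, namely $\mathrm{Cal}\big(\widetilde S_E\big)=\xi(E)$, together with the second one $\mathcal T_E=-\frac\d{\d E}\;\!\xi(E)$. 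Differentiating the former in $E$ and inserting the latter yields
$$
\frac\d{\d E}\;\!\mathrm{Cal}\big(\widetilde S_E\big)
=\frac\d{\d E}\;\!\xi(E)
=-\mathcal T_E.
$$

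Next I would unfold the definition of the average time delay $\mathcal T_E=\int_{\Gamma_E}\tau_E(m_0)\;\!\frac{\omega^{n-1}(m_0)}{(n-1)\;\!!}$ and replace $\tau_E(m_0)$ by $\lim_{r\to\infty}\tau_r(m_0)$ using \eqref{Eq_nous_eux}, which is valid precisely on $\Gamma_E$. This immediately produces the first claimed equality
$$
\frac\d{\d E}\;\!\mathrm{Cal}\big(\widetilde S_E\big)
=-\int_{\Gamma_E}\lim_{r\to\infty}\tau_r(m_0)\;\!\frac{\omega^{n-1}(m_0)}{(n-1)\;\!!}\;\!.
$$
For the second equality, I would insert the explicit value $\tau_E(m_0)=-(T\circ S)(m_0)$ established just before \eqref{Eq_nous_eux} and substitute the definition \eqref{def_T} of $T$, so that $-(T\circ S)(m_0)=-\frac{(\Phi\circ S)(m_0)\cdot(\nabla H_0\circ S)(m_0)}{|(\nabla H_0\circ S)(m_0)|^2}$; the two minus signs then cancel, giving exactly the stated integrand.

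The computation is essentially bookkeeping once the structural results are in place, so the genuine difficulty lies not in this final combination but in having verified its prerequisites: Lemma \ref{Hyp_BP} (which legitimises the use of \cite{BP09}), the reduction picture of Section \ref{Sec_wave}, and the matching of definitions carried out in \eqref{Eq_nous_eux}. If I were to flag a single delicate point within the present proof, it would be ensuring that the two notions of time delay are compared on the same object: the identity \eqref{Eq_nous_eux} holds only after restriction to the section $\Gamma_E$ (where $T(m_0)=0$), so the substitution of $\lim_{r\to\infty}\tau_r$ for $\tau_E$ must be performed pointwise on $\Gamma_E$ and not on all of $\Sigma^0_E$. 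Granting this, no further analytic subtlety (such as justifying differentiation under the integral sign) arises here, since the relation $\frac\d{\d E}\;\!\xi(E)=-\mathcal T_E$ is supplied ready-made by \eqref{Eq_main_BP}.
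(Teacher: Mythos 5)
Your proposal is correct and essentially identical to the paper's own proof: the paper likewise obtains the theorem by combining \eqref{Eq_nous_eux} with the two identities \eqref{Eq_main_BP} of \cite{BP09} (legitimised for each $E\in U$ by Lemma \ref{Hyp_BP}), then inserting $\tau_E(m_0)=-(T\circ S_E)(m_0)$ and the definition \eqref{def_T} of $T$ into the average $\mathcal T_E$. Your sign bookkeeping and your caveat that \eqref{Eq_nous_eux} holds only pointwise on $\Gamma_E$ (where $T\equiv0$) both match the paper's treatment.
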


Theorem \ref{Thm_Cal} implies that the derivative of the Calabi invariant evaluated
at $\widetilde S_E$ is equal to the average of the global time delay
$\lim_{r\to\infty}\tau_r$ on $\Gamma_E$ (or equivalently, the average of the arrival
time $T\circ S$ on $\Gamma_E$). Accordingly, it provides a simple and
explicit expression for $\frac\d{\d E}\;\!\mathrm{Cal}\big(\widetilde S_E\big)$ in terms
of $\Phi$, $S$ and $\nabla H_0$ on $\Gamma_E$. Note that Theorem
\ref{Thm_Cal} also holds with $\lim_{r\to\infty}\tau_r$ replaced by
$\lim_{r\to\infty}\tau_r^{\rm in}$ if $|\nabla H_0|^2=|\nabla H_0\circ S|^2$ on
$\Gamma_E$ (see Corollary \ref{Cor_unsym}).

\begin{Example}[$H_0(q,p)=h(p)$, the end]\label{Ex_hp_4}
If the dimension of $M\simeq\R^{2n}$ is bigger or equal to $4$ and $V$ has compact
support, then we know from Example \ref{Ex_hp_3} that all the assumptions of Lemma
\ref{Hyp_BP} are verified on the open set $U_\delta\subset\R$. So, Theorem
\ref{Thm_Cal} applies, and one has for each $E\in U_\delta$
$$
\frac\d{\d E}\;\!\mathrm{Cal}\big(\widetilde S_E\big)
=\int_{\{(q,p)\in\R^{2n}\mid h(p)=E,~q\cdot(\nabla h)(p)=0\}}
\frac{q_+\cdot(\nabla h)(p^+)}{|(\nabla h)(p^+)|^2}
\;\!\frac{\omega^{n-1}(q,p)}{(n-1)\;\!!}\;\!,
$$
with $(q_+,p^+):=S(q,p)$. In the particular case $h(p)=|p|^2/2$, one thus obtains
$$
\frac\d{\d E}\;\!\mathrm{Cal}\big(\widetilde S_E\big)
=(2E)^{(n-3)/2}\int_{\mathbb S^{n-1}}\d^{n-1}\widehat p
\int_{q\cdot p=0}\d^{n-1}q\,\big(q_+\cdot p^+\big),
$$
where $\widehat p:=p/|p|$ and $\d^{n-1}\widehat p$ is the spherical measure on
$\mathbb S^{n-1}$. This corresponds to the case treated in \cite[Sec.~4.2]{BP09}
(when the parameter $R\in\R$ of \cite[Eq.~(4.5)]{BP09} is taken to be zero).
\end{Example}

\begin{Example}[Poincar\'e ball, the end]\label{Ex_Poin_4}
If the dimension of $M\simeq\mathring B_1\times\R^n\setminus\{0\}$ is bigger or equal
to $4$ and $V$ has compact support, then we know from Example \ref{Ex_Poin_3} that
all the assumptions of Lemma \ref{Hyp_BP} are verified on the open set
$U_\delta\subset\R$. So, Theorem \ref{Thm_Cal} applies, and one has for each
$E\in U_\delta$
$$
\frac\d{\d E}\;\!\mathrm{Cal}\big(\widetilde S_E\big)
=(2E)^{-1/2}\int_{\{(q,p)\in\mathring B_1\times\R^n\setminus\{0\}\mid
|p|^2(1-|q|^2)^2=8E,~p\cdot q=0\}}\Phi(q_+,p^+)\;\!
\frac{\omega^{n-1}(q,p)}{(n-1)\;\!!}\;\!,
$$
with $\Phi(q,p)=\tanh^{-1}\left(\frac{2(p\cdot q)}{|p|(1+|q|^2)}\right)$ and
$(q_+,p^+):=S(q,p)$.
\end{Example}

%--------------------------------------------------------------------------------------
\section*{Acknowledgements}
%--------------------------------------------------------------------------------------

R.T.d.A thanks the Institut de math\'ematiques de l'Universit\'e de Neuch\^atel for
its kind hospitality in February and July 2011. A.G. is grateful for the hospitality
provided by the Mathematics Department of the Pontificia Universidad Cat\'olica de
Chile in March 2011.

%--------------------------------------------------------------------------------------
%\bibliography{../bibliographie/bibliographie}

\begin{thebibliography}{10}

\bibitem{AM78}
R.~Abraham and J.~E. Marsden.
\newblock {\em Foundations of mechanics}.
\newblock Benjamin/Cummings Publishing Co. Inc. Advanced Book Program, Reading,
  Mass., 1978.
\newblock Second edition, revised and enlarged, With the assistance of Tudor
  Ra{\c{t}}iu and Richard Cushman.

\bibitem{AMR88}
R.~Abraham, J.~E. Marsden, and T.~Ratiu.
\newblock {\em Manifolds, tensor analysis, and applications}, volume~75 of {\em
  Applied Mathematical Sciences}.
\newblock Springer-Verlag, New York, second edition, 1988.

\bibitem{Ara05}
A.~Arai.
\newblock Generalized weak {W}eyl relation and decay of quantum dynamics.
\newblock {\em Rev. Math. Phys.}, 17(9):1071--1109, 2005.

\bibitem{Ban78}
A.~Banyaga.
\newblock Sur la structure du groupe des diff\'eomorphismes qui pr\'eservent
  une forme symplectique.
\newblock {\em Comment. Math. Helv.}, 53(2):174--227, 1978.

\bibitem{BD81}
D.~Boll{\'e} and J.~D'Hondt.
\newblock On the {H}ilbert-space approach to classical time delay.
\newblock {\em J. Phys. A}, 14(7):1663--1674, 1981.

\bibitem{BGG83}
D.~Boll{\'e}, F.~Gesztesy, and H.~Grosse.
\newblock Time delay for long-range interactions.
\newblock {\em J. Math. Phys.}, 24(6):1529--1541, 1983.

\bibitem{BO81}
D.~Boll{\'e} and T.~A. Osborn.
\newblock Sum rules in classical scattering.
\newblock {\em J. Math. Phys.}, 22(4):883--892, 1981.

\bibitem{BP09}
V.~Buslaev and A.~Pushnitski.
\newblock The scattering matrix and associated formulas in {H}amiltonian
  mechanics.
\newblock {\em Comm. Math. Phys.}, 293(2):563--588, 2010.

\bibitem{dCN02}
C.~A.~A. de~Carvalho and H.~M. Nussenzveig.
\newblock Time delay.
\newblock {\em Phys. Rep.}, 364(2):83--174, 2002.

\bibitem{DG97}
J.~Derezi{\'n}ski and C.~G{\'e}rard.
\newblock {\em Scattering theory of classical and quantum {$N$}-particle
  systems}.
\newblock Texts and Monographs in Physics. Springer-Verlag, Berlin, 1997.

\bibitem{DS92}
E.~Doron and U.~Smilansky.
\newblock Chaotic spectroscopy.
\newblock {\em Phys. Rev. Lett.}, 68:1255, 1992.

\bibitem{GT07}
C.~G{\'e}rard and R.~Tiedra~de Aldecoa.
\newblock Generalized definition of time delay in scattering theory.
\newblock {\em J. Math. Phys.}, 48(12):122101, 15, 2007.

\bibitem{Gor70}
W.~B. Gordon.
\newblock On the completeness of {H}amiltonian vector fields.
\newblock {\em Proc. Amer. Math. Soc.}, 26:329--331, 1970.

\bibitem{GT11}
A.~Gournay and R.~Tiedra~de Aldecoa.
\newblock A formula relating sojourn times to the time of arrival in Hamiltonian dynamics.
\newblock preprint on \texttt{http://arxiv.org/abs/1101.1609}.

\bibitem{Her74}
I.~W. Herbst.
\newblock Classical scattering with long range forces.
\newblock {\em Comm. Math. Phys.}, 35:193--214, 1974.

\bibitem{Hun68}
W.~Hunziker.
\newblock The {S}-matrix in classical mechanics.
\newblock {\em Comm. Math. Phys.}, 8(4):282--299, 1968.

\bibitem{JO91}
A.~Jensen and T.~Ozawa.
\newblock Classical and quantum scattering for {S}tark {H}amiltonians with
  slowly decaying potentials.
\newblock {\em Ann. Inst. H. Poincar\'e Phys. Th\'eor.}, 54(3):229--243, 1991.

\bibitem{Jun86}
C.~Jung.
\newblock Poincar\'e map for scattering states.
\newblock {\em J. Phys. A}, 19(8):1345--1353, 1986.

\bibitem{Kna99}
A.~Knauf.
\newblock Qualitative aspects of classical potential scattering.
\newblock {\em Regul. Chaotic Dyn.}, 4(1):3--22, 1999.

\bibitem{KK92}
M.~Klein and A.~Knauf.
\newblock {\em Classical planar scattering by Coulombic potentials}, volume~13
  of {\em Lecture Notes in Physics Monographs}.
\newblock Springer-Verlag, New York, 1992.

\bibitem{LM87}
P.~Libermann and C.-M. Marle.
\newblock {\em Symplectic geometry and analytical mechanics}, volume~35 of {\em
  Mathematics and its Applications}.
\newblock D. Reidel Publishing Co., Dordrecht, 1987.
\newblock Translated from the French by Bertram Eugene Schwarzbach.

\bibitem{MS98}
D.~McDuff and D.~Salamon.
\newblock {\em Introduction to symplectic topology}.
\newblock Oxford Mathematical Monographs. The Clarendon Press Oxford University
  Press, New York, second edition, 1998.

\bibitem{Miy01}
M.~Miyamoto.
\newblock A generalized {W}eyl relation approach to the time operator and its
  connection to the survival probability.
\newblock {\em J. Math. Phys.}, 42(3):1038--1052, 2001.

\bibitem{Nak99}
K.~Nakanishi.
\newblock Energy scattering for nonlinear {K}lein-{G}ordon and {S}chr\"odinger
  equations in spatial dimensions {$1$} and {$2$}.
\newblock {\em J. Funct. Anal.}, 169(1):201--225, 1999.

\bibitem{Nak01}
K.~Nakanishi.
\newblock Remarks on the energy scattering for nonlinear {K}lein-{G}ordon and
  {S}chr\"odinger equations.
\newblock {\em Tohoku Math. J. (2)}, 53(2):285--303, 2001.

\bibitem{Nar80}
H.~Narnhofer.
\newblock Another definition for time delay.
\newblock {\em Phys. Rev. D (3)}, 22(10):2387--2390, 1980.

\bibitem{NT81}
H.~Narnhofer and W.~Thirring.
\newblock Canonical scattering transformation in classical mechanics.
\newblock {\em Phys. Rev. A (3)}, 23(4):1688--1697, 1981.

\bibitem{RT10}
S.~Richard and R.~Tiedra~de Aldecoa.
\newblock A new formula relating localisation operators to time operators.
\newblock to appear in AMS Contemporary Mathematics Series.

\bibitem{RT11}
S.~Richard and R.~Tiedra~de Aldecoa.
\newblock Time delay is a common feature of quantum scattering theory.
\newblock {\em J. Math. Anal. Appl.}, 387(2):618--630, 2012.

\bibitem{RS95}
C.~Rouvinez and U.~Smilansky.
\newblock A scattering approach to the quantization of {H}amiltonians in two
  dimensions---application to the wedge billiard.
\newblock {\em J. Phys. A}, 28(1):77--104, 1995.

\bibitem{Sas10}
M.~Sassoli de Bianchi.
\newblock Time-delay of classical and quantum scattering processes: a
  conceptual overview and a general definition.
\newblock preprint on \texttt{http://arxiv.org/abs/1010.5329}.

\bibitem{Sim71}
B.~Simon.
\newblock Wave operators for classical particle scattering.
\newblock {\em Comm. Math. Phys.}, 23:37--48, 1971.

\bibitem{Smi60}
F.~T. Smith.
\newblock Lifetime matrix in collision theory.
\newblock {\em Phys. Rev.}, 118:349--356, 1960.

\bibitem{Thi97}
W.~Thirring.
\newblock {\em Classical mathematical physics}.
\newblock Springer-Verlag, New York, third edition, 1997.
\newblock Dynamical systems and field theories, Translated from the German by Evans M. Harrell, II.

\bibitem{Wig55}
E.~P. Wigner.
\newblock Lower limit for the energy derivative of the scattering phase shift.
\newblock {\em Phys. Rev. (2)}, 98:145--147, 1955.
    
\end{thebibliography}
%--------------------------------------------------------------------------------------

\def\cprime{$'$} \def\cprime{$'$}

\end{document}